\newcommand{\techRep}{true} 
\newcommand{\iftechrep}{\ifthenelse{\equal{\techRep}{true}}}
\newcommand{\sM}{\mathcal{M}}
\newcommand{\sV}{\mathcal{V}}
\newcommand{\node}{\text{node}}
\newcommand{\rng}{\text{rng}}
\newcommand{\trans}{\text{Trans}_\sV}
\newcommand{\fopt}{\text{fopt}}
\newcommand{\configs}{\Sigma_\sV}
\newcommand{\pred}{\text{pred}}
\newcommand{\worklist}{\text{work}}
\newcommand{\sem}[1]{\llbracket #1 \rrbracket}
\newcommand{\abs}[1]{| #1 |}
\newcommand{\cyl}[1]{\text{Cyl}(#1)}
\newcommand{\reach}[1]{\text{Reach}(#1)}
\newcommand{\tup}[1]{\langle #1 \rangle}
\newcommand{\sG}{\mathcal{G}}
\newcommand{\sC}{\mathcal{C}}
\newcommand{\val}[2]{\text{val}^{\,#2}(#1)}
\newcommand{\dist}{\text{Dist}}
\newcommand{\mpphi}{\mathcal{M}[\phi]}
\newcommand{\gpp}{\mathcal{G}[\phi_1, \phi_2]}
\definecolor{orange3}{rgb}{1.0,0.2538,0.1681}
\definecolor{blau}{rgb}{0,.39608,0.74118}
\definecolor{rot}{rgb}{0.79216,.12941,0.24706}
\definecolor{dgruen}{rgb}{0,.7,0}
\newcommand\Color[2]{\textcolor{#1}{#2}}
\newcommand{\nb}[1]{\marginpar{\scriptsize #1}}
\newcommand\Andreas[1]{\Color{rot}{(*#1*)}}
\newcommand\Javier[1]{\Color{dgruen}{(*#1*)}}
\newcommand\javier[1]{\nb{\Color{dgruen}{(*#1*)}}}
\newcommand\andreas[1]{\nb{\Color{rot}{(*#1*)}}}
 \renewcommand\Andreas[1]{}
 \renewcommand\Javier[1]{}
 \renewcommand\javier[1]{}
 \renewcommand\andreas[1]{}
\begin{document}

\author{Javier Esparza \and Andreas Gaiser} 
\title{Probabilistic Abstractions with Arbitrary Domains}
\institute{Fakult{\"a}t f{\"u}r Informatik, Technische Universit{\"a}t M{\"u}nchen, Germany \\ \texttt{$\{$esparza,gaiser$\}$@model.in.tum.de}}  
\maketitle

\iftechrep{\andreas{Technical Report-Version}}{\andreas{Konferenz-Paper}}

\pagestyle{headings}  
\LinesNotNumbered

\begin{abstract}

Recent work by Hermanns et al. and Kattenbelt et al. has extended
counterexample-guided abstraction refinement (CEGAR) to probabilistic
programs. 
These approaches are limited to predicate abstraction.
We present a novel technique, based on the abstract reachability tree
recently introduced by Gulavani et al., that can use arbitrary abstract
domains and widening operators (in the sense of Abstract Interpretation).
We show how suitable widening operators can deduce loop invariants
difficult to find for predicate abstraction, and propose refinement techniques.

\end{abstract}

\section{Introduction}
Abstraction techniques are crucial for the automatic verification of systems
with a finite but very large or infinite state space. The
Abstract Interpretation framework provides the mathematical basis of
abstraction \cite{cousot}. 
Recent work has extended abstraction techniques to probabilistic systems 
using games \cite{wachter,hermanns,kattenbelt,bjoern-phd,probtrans}. 
The systems (e.g. probabilistic programs) are given semantics in terms of Markov
Decision Processes (MDPs), which can model nondeterminism and (using interleaving semantics)
concurrency. The key idea is to abstract the MDP into a stochastic 2-Player game, distinguishing between nondeterminism inherent to the system 
(modeled by the choices of Player 1)
and nondeterminism introduced by the abstraction (modeled by Player 2). 
The construction ensures that the probability of reaching a goal state in the MDP using an
optimal strategy is bounded from above and from below by the
supremum and infimum of the probabilities of reaching the goal in the 2-Player game when Player
1 plays according to an optimal strategy (and Player 2 is free to play in any way)\footnote{In~\cite{kattenbelt}, the roles
of the players are interchanged.}.
An analogous result holds for the probability of reaching a goal state in the MDP using a
pessimal strategy.

The abstraction technique of \cite{kattenbelt,probtrans} and the related 
\cite{wachter,hermanns,bjoern-phd} relies on predicate abstraction:
an abstract state is an equivalence class of concrete states, where 
two concrete states are equivalent if they satisfy the same subset of a given
set of predicates. The concretization of two distinct abstract states
is always disjoint (the {\em disjointness property}). 
If the upper and lower bounds obtained using a set of predicates
are not close enough, the abstraction is
 refined by adding new predicates with the help of
interpolation, analogously to the well-known CEGAR approach for
non-probabilistic systems. 

While predicate abstraction 
has proved very successful, it is known to have a number of shortcomings:
potentially expensive equality and inclusion checks for abstract states, and ``predicate explosion''.
In the non-probabilistic case, the work of Gulavani {\it et al.} has 
extended the CEGAR approach to a broader range of abstract domains \cite{gulavani_old}, in which 
widening operations can be combined with interpolation methods, leading to more efficient 
abstraction algorithms.
We show that the ideas of Gulavani {\it et al.} can also be applied to probabilistic systems, which extends the approaches of 
\cite{kattenbelt,mdpreport,probtrans} to 
arbitrary abstract domains. Given a probabilistic program, an abstract domain and a widening for this domain, we show how to construct an abstract stochastic 2-Player reachability game. 
The disjointness property is not required. We prove that bounds on the 
probability of reaching a goal state in the MDP
can be computed as in \cite{kattenbelt,probtrans}. The proofs 
of \cite{probtrans} use the disjointness property to easily define a
Galois connection between the sets of functions assigning values to the abstract and the concrete states. 
Since there seems to be no easy way to adapt them or the ones from~\cite{kattenbelt,mdpreport} to our construction, 
we show the soundness of our approach
by a new proof that uses different techniques.

We also propose an abstraction refinement technique that adapts the idea of delaying the application
of widenings \cite{astree} to the probabilistic case. The technique delays widenings at the nodes
which are likely to have a larger impact in improving the bounds. 
We present experimental results on several examples.

The paper is organized as follows. In the rest of the introduction we 
discuss related work and informally present the key ideas of our approach by means of examples. 
Section~\ref{sec:preliminaries}
contains preliminaries. Section~\ref{sec:abstraction} formally introduces the abstraction technique, and
proves that games we are considering indeed give us upper resp.
lower bound of the exact minimal and maximal reachability probabilities of reaching a set of states. 
Section~\ref{sec:refinements} shows methods of refining our abstractions
and discusses some experiments.
\paragraph{Related work.}

Besides~\cite{wachter,hermanns,kattenbelt,bjoern-phd,probtrans}, 
Monniaux has studied in~\cite{monniaux1} how to abstract 
probability distributions over program states (instead of the states themselves), but only considers
upper bounds for probabilities, as already pointed out in~\cite{probtrans}. 
In~\cite{monniaux2}, Monniaux analyses different quantitative properties of 
Markov Decision processes, again using abstractions of probability distributions. 
In contrast, our approach constructs an abstraction using 
``non-probabilistic'' domains and widenings and then performs the computation of
strategies and strategy values, which might be used for a refinement of the abstraction.
Finally, in~\cite{hankin} Hankin, Di Pierro, and Wiklicky develop a framework for 
probabilistic Abstract Interpretation which, loosely speaking,
 replaces abstract domains by linear spaces and Galois connections by special linear maps,
and aims at computing expected values of random variables.
In contrast, we stick to the standard framework, since in particular we wish to
apply existing tools, and aim for upper and lower bounds of probabilities.
\subsection{An example}

\label{subsec:example}
Consider the following program, written in pseudo code:

\begin{center}
\begin{minipage}{10cm}
\begin{verbatim}
   int nrp = 0;
1: while (nrp < 100) 
2:    if (rec_pack()) then nrp = nrp+1 else break 
3: if (nrp < 1) then (fail or goto 1) else exit 
\end{verbatim}
\end{minipage}
\end{center}

\noindent where the choice between \verb+fail+ and \verb+goto 1+ is decided by
the environment. The goal of the program is to receive up to $100$ packets
through a network connection. Whenever \texttt{rec\_pack()}
is executed, the connection can break down with probability 0.01, in which case
\texttt{rec\_pack()} returns false and no further packets can be received.
If at least one packet is received (line 3) the program terminates\footnote{It would be more realistic to set another bound, like
20 packets, but with one packet the probabilities are easy to compute.}; otherwise,
the program tries to repair the connection, which may fail or succeed, 
in which case
the program is started again. The choice 
between success and failure is nondeterministic.

We formalize the pseudo code as a \emph{Nondeterministic Probabilistic Program}, abbreviated NPP\footnote{NPPs 
roughly correspond to a subset of the input language of the model checker PRISM~\cite{qproversite}.} (see Fig.~\ref{fig:ex1_code}).
A NPP is a collection of guarded commands. A guarded command consists of a name (e.g. \verb+A1+), 
followed by a guard (e.g. \texttt{(ctr = 1) \& (nrp < 100)})  and a sequence of pairs of probabilities and 
update commands (e.g. \texttt{0.99: (nrp' = nrp+1)}), separated by '\texttt{+}'. 
A \texttt{reach}-line at the end of the NPP 
describes the set of states for which we want to compute the reachability 
probability. We call them the {\em final} states. In our example, reaching \texttt{fail} corresponds to satisfying 
\verb+(ctr = 3) && (nrp < 1)+.
A program execution starts with the initial configuration given by the variable declarations.
The program chooses a guarded command whose guard is enabled (i.e., satisfied) by the current state of the program, 
and selects one of its update commands at random, according to the annotated probabilities. After performing the update, the process 
is repeated until the program reaches a final state.
\begin{figure}
 \centering
\begin{minipage}{4.5cm}
\begin{verbatim}
  int nrp = 0, ctr = 1;
A1: (ctr = 1) & (nrp < 100) 
     -> 0.99:(nrp' = nrp+1) 
      + 0.01:(ctr' = 2);
A2: (ctr = 1) & (nrp >= 100) 
     -> 1:(ctr' = 3);
A3: (ctr = 2) & (nrp < 1)
     -> 1:(ctr' = 1);
A4: (ctr = 2)
     -> 1:(ctr' = 3);
A5: (ctr = 3) & (nrp >= 1)
     -> 1:(ctr' = 3);
reach: (ctr = 3) & (nrp < 1)
\end{verbatim}

\end{minipage}
\begin{minipage}{7.5cm}
\centering
 \tikzset{    
   dim/.style={
           rounded corners,
           minimum height=2em,
           inner sep=2pt,
           text centered,
           },
    state/.style={
           rectangle,
           rounded corners,
           draw=black, very thick,
           minimum height=2em,
           inner sep=2pt,
           text centered,
           },
   act/.style={
           circle,
           draw=black, very thick,
           minimum height=2em,
           inner sep=2pt,
           text centered,
           },
  fail/.style={circle, double, draw=black, inner sep=2pt, minimum height=2em, very thick},
  nothing/.style={
           rectangle,
           rounded corners,
           minimum height=2em,
           inner sep=0pt,
           text centered,
           }
  }
\scalebox{0.55}{
\begin{tikzpicture}[->,>=stealth',scale=1,font=\Large]
 \node[state, anchor=center] (S1) 
 {1,0};
  
 \node[dim] (START) [above=of S1]
 {};

 \node[act] (S2) [below =of S1]
 {\texttt{A1}};
 
 \node[state] (S3) [below left=of S2]
  {1,1};

\node[state] (S5) [below right =of S2]
  {2,0};

\node[act] (S6a) [below =of S5]
  {\texttt{A4}};

\node[act] (S6b) [below right =of S5]
  {\texttt{A3}};

\node[act] (S7) [below =of S3]
  {\texttt{A1}};

\node[act] (S8a) [below left =of S7]
  {1,2};

\node[act] (S8b) [below right=of S7]
  {2,1};

\node[dim] (S9a) [below =of S8a]
  {$\Huge{\ldots}$};

\node[dim] (S9b) [below =of S8b]
  {$\Huge{\ldots}$};

\node[fail] (S10) [below =of S6a]
  {3,0};
 \path[very thick]
       (S1) edge[right ] node [right] {} (S2)
       (START) edge[right ] node [right] {} (S1)
       (S2) edge[left]  node{$0.99$} (S3)
       (S2) edge[right]  node{\hspace{0.1cm}$0.01$} (S5)
       (S5) edge[left]  node{} (S6a)
       (S5) edge[left]  node{} (S6b)
       (S6b) edge[bend right]  node [right] {$1$} (S1)
       (S3) edge  node [right] {} (S7)
       (S7) edge  node [left] {$0.99$} (S8a)
       (S7) edge  node [right] {\hspace{0.1cm}$0.01$} (S8b)
       (S8a) edge  node [right] {} (S9a)
       (S8b) edge  node [right] {} (S9b)
       (S6a) edge  node [right] {\hspace{0.2cm}$1$} (S10);
\end{tikzpicture}
}
\end{minipage}
\caption{Example program and prefix of the corresponding Markov Decision Process.
Actions are drawn as circles, program states $\tup{ctr, nrp}$ as rectangles. 
$\tup{3,0}$ is a final state.}
\label{fig:ex1_code}
\end{figure}


\begin{figure}[htbp]
\hfill
\centering
 \tikzset{
    dim/.style={
           rounded corners,
           minimum height=2em,
           inner sep=2pt,
           text centered,
           },
    p1/.style={
           rectangle,
           rounded corners,
           draw=black, very thick,
           minimum height=2em,
           inner sep=2pt,
           text centered,
           },
    prb/.style=
	  {
           rectangle split,
           rectangle,
	   fill = lightgray,
	   rounded corners,
           draw=black, very thick,
           minimum height=2em,
           inner sep=2pt,
           text centered,
           },
     p2/.style=
	  {
           circle,
	   text width=1.3cm,
           draw=black, very thick,
           minimum height=2em,
           inner sep=2pt,
           text centered,
            },
   deadend/.style={circle,draw=white, very thick,
           minimum height=2em,},
  accept/.style={circle,draw=white, very thick,
           minimum height=2em,},
  }
\scalebox{0.34}{
\begin{tikzpicture}[->,>=stealth',scale=1, font=\Huge]

 \node[p1, anchor=north east] (AS1) 
 {${1, [0,0]}$};

 \node[dim] (ASTART) [above=of AS1]
 {};

 \node[p2] (AS2) [below =of AS1]
 {$\texttt{A1}$};
 
 \node[prb] (AS3) [below =of AS2]
 {${1, [0,0]}$};

 \node[p1] (AS4) [below left =of AS3] 
 {${2, [0,0]}$};

 \node[p2] (AS5) [below left =1.75cm of AS4]
 {\texttt{A3}};

 \node[p2] (AS6) [below =of AS4]
 {\texttt{A4}};

 \node[prb] (AS7) [below =of AS5]
 {${2, [0,0]}$};

 \node[prb] (AS8) [below =1cm of AS6]
 {${2, [0,0]}$};

 \node[p1] (AS9) [below right =of AS3] 
 {${1, [0,\infty)}$};

 \node[p2] (AS10) [below right =1.75cm of AS9] 
 {$\texttt{A1}$};

 \node[p2] (AS11) [below left =1.75cm of AS9] 
 {$\texttt{A2}$};

 \node[prb] (AS12) [below =of AS11]
 {${1, [100,\infty)}$};

 \node[prb] (AS13) [below  =of AS10] 
 {${1, [0, 99]}$};

 \node[p1] (AS15) [below =of AS13] 
 {${2, [0,99]}$};

 \node[p2] (AS16) [below right =1.75cm of AS15] 
 {\texttt{A3}};

  \node[p2] (AS17) [below  =of AS15] 
 {\texttt{A4}};

 \node[prb] (AS18) [below =of AS16] 
 {${2, [0,0]}$};


 \node[prb] (AS20) [below =of AS17] 
 {${2, [0,99]}$};

 \node[p1] (AS21) [below =of AS20] 
 {${3, [0,99]}$};

 \node[p2] (AS21a) [below left =of AS21] 
 {$\texttt{A5}$};

 \node[prb] (AS21b) [below =of AS21a] 
 {$3, [1,99]$};

 \node[p2] (AS22) [below =5cm of AS21] 
 {$\varocircle?$};

 \node[p1] (AS23) [below =of AS12] 
 {${3, [100,\infty)}$};

 \node[p2] (AS23a) [below =of AS23] 
 {\texttt{A5}};

 \node[prb] (AS23b) [below =of AS23a] 
 {${3, [100, \infty)}$};

 \node[p1] (AS24) [below =of AS8] 
 {${3, [0,0]}$};

 \node[p2] (AS24a) [below =of AS24] 
 {$\varocircle?$};

\node[accept] (AS25) [below =8.5cm of AS24]
  {\Huge{$\varocircle$}};

\node[deadend] (AS26) [below right =4cm of AS11]
  {\Huge{$\varotimes$}};

  \path[very thick] (ASTART) edge node  {} (AS1);
  \path[very thick] (AS1) edge node  {} (AS2);
  \path[very thick] (AS2) edge node  {} (AS3);
  \path[very thick] (AS3) edge node [above] {\hspace{-0.7cm}$0.01$} (AS4);
  \path[very thick] (AS3) edge node [above] {\hspace{0.9cm}$0.99$} (AS9);
  \path[very thick] (AS4) edge node  {} (AS5);
  \path[very thick] (AS4) edge node  {} (AS6);
  \path[very thick] (AS5) edge node  {} (AS7);
  \path[very thick] (AS6) edge node  {} (AS8);
  \path[very thick] (AS7) edge[bend left=80]  node [left] {$1$} (AS1);
  \path[very thick] (AS8) edge node  [right] {$1$} (AS24);
  \path[very thick] (AS9) edge node  {} (AS10);
  \path[very thick] (AS9) edge node  {} (AS11);
  \path[very thick] (AS10) edge [bend right=10] node   {} (AS26);
  \path[very thick] (AS10) edge node  {} (AS13);
  \path[very thick] (AS11) edge [bend left=10] node  {} (AS26);
  \path[very thick] (AS11) edge node  {} (AS12);
  \path[very thick] (AS12) edge node [right]  {$1$} (AS23);
  \path[very thick] (AS13) edge node [right] {$0.01$} (AS15);
  \path[very thick] (AS15) edge node  {} (AS16);
  \path[very thick] (AS15) edge node  {} (AS17);
  \path[very thick] (AS16) edge node  {} (AS18);
  \path[very thick] (AS17) edge node  {} (AS20);
  \path[very thick] (AS20) edge node [left] {$1$} (AS21);
  \path[very thick] (AS21) edge node  {} (AS22);
  \path[very thick] (AS21) edge node  {} (AS21a);
  \path[very thick] (AS21a) edge node  {} (AS26);
  \path[very thick] (AS21a) edge node  {} (AS21b);
  \path[very thick] (AS21b) edge node [right] {\hspace{0.4cm}$1$} (AS21);
  \path[very thick] (AS22) edge [loop below] node  {} (AS22);
  \path[very thick] (AS22) edge [bend left=10] node  {} (AS25);
  \path[very thick] (AS13) edge [bend right=80] node [above] {\hspace{1.2cm}$0.99$} (AS9);
  \path[very thick] (AS18) edge[bend right=85] node [above] {\hspace{0.3cm}$1$} (AS9);
  \path[very thick] (AS24) edge node  {} (AS24a);
  \path[very thick] (AS24a) edge node  {} (AS25);
  \path[very thick] (AS23) edge node  {} (AS23a);
  \path[very thick] (AS23a) edge node  {} (AS23b);
  \path[very thick] (AS23b) edge [bend right=60] node  {\hspace{0.8cm}$1$} (AS23);
  \path[very thick] (AS16) edge [bend left=95] node  {} (AS26);
  \path[very thick] (AS21a) edge [bend left=10] node  {} (AS25);

\end{tikzpicture}
}
 \caption{Abstraction of the program from Fig.~\ref{fig:ex1_code}.}
\label{fig:ex1_abstr}
\end{figure}


The probability of reaching \texttt{fail} depends on the behaviour of the environment. 
The smallest (largest) probability clearly corresponds 
to the environment always choosing \verb+goto 1+ (\verb+fail+), and its value is $0$ ($0.01$). 
However, a brute force automatic technique will construct the standard
semantics of the program, a Markov decision process (MDP) with over 400 states, part of which
is shown in Fig.~\ref{fig:ex1_code}. 
\andreas{Hinweis auf tight intervals eingefuegt}
We informally introduce 
our abstraction technique, which does better and is able to infer tight intervals for both the smallest and the largest probability. It is based on the parallel 
or menu-based predicate abstraction of
\cite{wachter,bjoern-phd}, which we adapt to arbitrary abstract domains.

\subsection{Constructing a valid abstraction} \label{sec:ex_arena_constr}
Given an abstract domain and a widening operator, we abstract
the MDP of the program into four different {\em stochastic 2-Player games} sharing the same 
arena and the same rules (i.e., the games differ only on the winning conditions). A round of the game starts at an abstract
state $n$ (we think of $n$ as a set of concrete states)
with Player 1 to move. Let $n_i$ be the set of concrete states of $n$ that enable command $A_i$. Player 1 proposes an $A_i$ such that $n_i \neq \emptyset$,
modeled by a move from $n$ to a node $\tup{n , A_i}$. If $n$ contains some 
final state, then Player 1 can also propose to end the play (modeled by a move to $\tup{n, \varocircle}$). Then it is Player 2's turn. If Player 1 proposes $A_i$, then Player 2 can 
accept the proposal (modeled by a move to a node determined below), or
reject it and end the play (modeled by a move to another distinguished node $\varotimes$), but only 
if $n_i \not= n$. If Player 2 accepts $A_i$, she moves to some node $\tup{n, A_i, n'}$ such that 
$n' \subseteq n_i$, i.e., Player 2 can ''pick`` a subset $n'$ of $n_i$ out of the subsets offered by the game arena (every
concrete state in $n_i$ is contained in one such $n'$).

The next node of the play is determined 
probabilistically: one of the updates of $A_i$ is selected randomly according to the probabilities,
and the play moves to the abstract state obtained by applying the 
update and (in certain situations) the widening operator to $n'$. 
If Player 1 proposes $\varocircle$ by choosing $\tup{n, \varocircle}$, then Player 2 can accept the proposal, 
(modeled by a move to $\varocircle$) or, if not all concrete states of $n$ are final, 
reject it (modeled by a move $\tup{n,\varocircle} \rightarrow \tup{n,\varocircle}$).

\andreas{Beschreibung eingefuegt, bitte anschauen}

Fig.~\ref{fig:ex1_abstr} shows an arena for 
the program of Fig.~\ref{fig:ex1_code} with the abstract domain and widening operator described in the following. Nodes owned by Player 1 are drawn 
as white rectangles, nodes owned by Player 2 as circles, and probabilistic nodes
as grey rectangles. In the figure we label a node $\tup{n , A_i}$ belonging to 
Player 2 with $A_i$ and a probabilistic node $\tup{n , A_i, n'}$ with $n'$ 
($n$ resp. $n$ and $A_i$ can easily be reconstructed by inspecting the direct 
predecessors). Nodes of the form $\tup{n, \varocircle}$ are labeled 
with '$\varocircle?$'.

A (concrete) state of the example program is a pair
$\langle {\it ctr}, {\it nrp} \rangle$, and an abstract state is a pair $\langle {\it ctr}, [a, b] \rangle$, 
where $[a,b]$ is an interval of values of \texttt{nrp} (i.e., \texttt{ctr} is not abstracted in the example). 
The widening operator $\nabla$ works as follows:
if the abstract state $\tup{{\it ctr}, [a,b]}$ has an ancestor
$\tup{{\it ctr}, [a',b']}$ along the path between it and the initial state given by the construction, then we overapproximate $\tup{{\it ctr}, [a,b]}$ by $\tup{{\it ctr}, s}$, with 
$s =  [a',b']\, \nabla \, [\min(a, a'), \max(b,b')]$. For instance
the node $n = \tup{1, [0,\infty)}$ in Fig.~\ref{fig:ex1_abstr}
enables $\texttt{A1}$ and $\texttt{A2}$, and so it has two successors. Since $n$ contains concrete states
that do not enable $\texttt{A1}$ and concrete states that do not enable $\texttt{A2}$, both of them have
$\varotimes$ as successor. The node $\tup{n, \texttt{A1}, \tup{1, [0,99]}}$ \andreas{EINGEFUEGT:}
(whose label is abbreviated by $\tup{1, [0,99]}$ in the figure) is probabilistic. Without widening, its successors would be 
$\tup{2, [0,99]}$ and $\tup{1, [1,100]}$ with probabilities $0.01$ and $0.99$. However,
$\tup{1, [1,100]}$ has $\tup{1, [0, \infty)}$ as (direct) predecessor, which has the same $ctr$-value. Therefore
the widening overapproximates $\tup{1, [1,100]}$ to $\tup{1, [0, \infty) \nabla [0,\infty)} = \tup{1, [0, \infty)}$,
and hence we insert an edge from $\tup{n, \texttt{A1}, \tup{1, [0,99]}}$ (back) to $\tup{1, [0,\infty)}$, labeled by
$0.99$.


After building the arena, we compute lower and upper bounds for the 
minimal and maximal reachability probabilities as the {\em values}
of four different games, defined as the winning probability of Player 1 
for optimal play. The winning conditions of the games are as follows: 
\begin{itemize}
 \item[(a)] Lower bound for the maximal probability: Player 1 wins
       if the play ends by reaching $\varocircle$, otherwise Player 2 wins. 
 \item[(b)] Upper bound for the maximal probability: Players 1 and 2 both win
       if the play ends by reaching $\varocircle$, and both lose otherwise. 
 \item[(c)] Lower bound for the minimal probability: Players 1 and 2 both lose
       if the play ends by reaching $\varocircle$ or $\varotimes$, and both win otherwise. 
 \item[(d)] Upper bound for the minimal probability: Player 1 loses if
       the play ends by reaching $\varocircle$ or $\varotimes$, otherwise Player 2 loses.
\end{itemize}
For the intuition behind these winning conditions, consider first game (a).
Since Player 1 models the environment and wins by reaching $\varocircle$,
the environment's goal is to reach a final state. Imagine first that the abstraction is the trivial one, i.e.,
abstract and concrete states coincide. Then Player 2 never has a choice, and the optimal strategy 
for Player 1 determines a set $S$ of action sequences whose total probability is equal to the 
maximal probability of reaching a final state. Imagine now that the abstraction is coarser. In the arena for 
the abstract game the sequences of $S$ are still possible, but now Player 2
may be able to prevent them, for instance by moving to $\varotimes$ 
when an abstract state contains concrete states not enabling the next action in the sequence. Therefore, in the 
abstract game the probability that Player 1 wins 
can be at most equal to the maximal probability. In game (b) the team formed by the two players
can exploit the spurious paths introduced by the abstraction to find a strategy leading to a better 
set of paths; in any case, the probability of $S$ is a lower bound for the winning probability
of the team. The intuition behind games (c) and (d) is similar.

In our example, optimal strategies for game (b) are:
for Player 1, always play the ``rightmost'' choice, except at $\tup{2, [0,99]}$, where she should play $\texttt{A4}$;
for Player 2, play $\varocircle$ if possible, otherwise anything but $\varotimes$.
The value of the game is $1$. In game (a), the optimal strategy
for Player 1 is the same, whereas Player 2 always plays $\varotimes$ (resp. stays in $\varocircle?$) whenever
possible. The value of the game is $0.01$. We get $[0.01, 1]$ as lower and upper
bound for the maximal probability.  
For the minimal probability we get the trivial bounds $[0, 1]$.


To get more precision, we can 
skip widenings at certain situations during the construction. If we e.g. apply widening only after the second unrolling of the loop,
the resulting abstraction allows us to obtain  the more precise bounds $[0, 0.01]$ and $[0.01, 0.01]$ for minimal and maximal reachability, respectively.

\begin{figure}[tbp]
\centering
\begin{minipage}[t]{0.45\linewidth} 
 \begin{verbatim}
     int c = 0, i = 0;
  1: if choice(0.5) then
  2:   while (i <= 100) 
  3:     i = i+1;
  4:     c = c-i+2 
  5: if (c >= i) then fail
 \end{verbatim}
\end{minipage}
\begin{minipage}[t]{0.45\linewidth} 
 \begin{verbatim}     
     int c = 0, i = 0;
  1: while(i <= 100) 
  2:   if choice(0.5) then i = (i+1);
  3:   c = c-i+2;
  4: if (c >= i) then fail
\end{verbatim}
\end{minipage}
\caption{Example programs 2 and 3.}
\label{fig:prog2}
\end{figure}

The main theoretical result of our paper
is the counterpart of the results of \cite{kattenbelt,wachter}: for arbitrary
abstraction domains, the values of the four games described above indeed 
yield upper and lower bounds of the maximal and minimal probability of 
reaching the goal nodes. 

In order to give a first impression of the advantages of 
abstraction domains beyond predicate abstraction in the probabilistic case, 
consider the (deterministic) pseudo code on the left of Fig.~\ref{fig:prog2}, 
a variant of the program above. Here $\texttt{choice}(p)$ models a 
call to a random number generator that returns $1$ with 
probability $p$ and $0$ with probability $1-p$. 

It is easy to see that $c \leq 1$ is a global invariant, and so the probability of failure is exactly $0.5$. 
Hence a simple invariant like $c \leq k$ for a $k \leq 100$, together with the postcondition $i > 100$ of the loop 
would be sufficient to negate the guard of the statement at line 5.
However, when this program is analysed with PASS \cite{wachter,hermanns}, a leading tool on probabilistic 
abstraction refinement on the basis of predicate abstraction, the while loop is 
unrolled 100 times because the tool fails to ``catch'' the invariant, independently of the options chosen to refine the abstraction
\footnote{Actually, the input language of PASS does not explicitly include
while loops, they have to be simulated. But this does not affect the
analysis.}. 

On the other hand, an analysis of
the program with the standard interval domain, the standard widening operator, and the standard technique of delaying
widenings \cite{astree}, easily `catches'' the invariant (see Section \ref{sec:experiments}).
The same happens for the program on the right of the figure, which exhibits a more interesting probabilistic behaviour, especially
a probabilistic choice within a loop: we obtain good upper and lower bounds for the probability of failure using the standard interval domain.
Notice that examples exhibiting the opposite behaviour 
(predicate abstraction succeeds where interval analysis fails) are not difficult to find; our thesis is only that the game-based abstraction 
approach of \cite{wachter,kattenbelt} can be extended to arbitrary 
abstract domains, making it more flexible and efficient.

\section{Stochastic 2-Player Games}
\label{sec:preliminaries}

This section introduces stochastic 2-Player games. For a more thorough introduction into the
subject and proofs for the theorems see e.g.~\cite{puterman,condon1,condon2}. 

Let $S$ be a countable set.
We denote by $\dist(S)$ the set of all distributions $\delta: S \rightarrow [0,1]$ over $S$ with $\delta(x) = 0$ for all but finitely many $x \in S$.

\begin{definition}
A \emph{stochastic 2-Player game} $\sG$ (short 2-Player game) is a tuple 
$((V_1, V_2, V_p), E, \delta, s_0)$, where
 \begin{itemize}
 \item $V_1, V_2, V_p$ are distinct, countable sets of states. We set $V = V_1 \cup V_2 \cup V_p$;
 \item $E \subseteq (V_1 \cup V_2) \times V$ is the set of \emph{admissible} player choices;
 \item $\delta: V_p \rightarrow \dist(V)$ is a probabilistic transition function;
 \item $s_0 \in V_1$ is the start state.
\end{itemize}
  Instead of $(q,r) \in E$ we often write $q \rightarrow r$.
  A string $w \in V^+$ is a \emph{finite run} (short: run) of $\sG$  if 
  (a) $w = s_0$, or (b) $w = w' s' s$ for some run $w's' \in V^\ast(V_1 \cup V_2)$ and $s' \rightarrow s$, or
  (c) $w = w' s' s$ for some run $w's' \in V^\ast V_p$ such that $\delta(s')(s) > 0$. We denote the set of all runs of $\sG$ by $\cyl{\sG}$. 
  A run $w = x_1 \ldots x_k$ is {\em accepting relative to $F \subseteq V_1$} if $x_k \in F$ and $x_i \not\in F$ for $1 \leq i < k$.
  The set of accepting runs relative to $F$ is denoted by $\cyl{\sG, F}$.

A stochastic 2-Player game with $V_2 = \emptyset$ is called a  \emph{Markov Decision Process (MDP)}, and then we write $\sG = ((V_1, V_p), E, \delta, s_0)$.
\end{definition}

Fix for the rest of the section a 2-Player game $\sG = ((V_1, V_2, V_p), E, \delta, s_0)$.
The behaviours of Player 1 and 2 in $\sG$ are described with the help of \emph{strategies}:
\begin{definition}\mbox{}
  A strategy for Player $i\in \{1, 2\}$ in $\sG$ is a partial function 
   $\phi: \cyl{\sG} \rightarrow \dist(V)$ satisfying the following two conditions:
  \begin{itemize}
   \item $\phi(w)$ is defined if{}f $w=w' v \in V^\ast V_i$ and
   $v \rightarrow x$ for some $x \in V$; and
   \item if $\phi(w)$ is defined and $\phi(w)(x) > 0$ then $wx$ is a run.  
  \end{itemize}
\noindent We denote the set of strategies for Player $i$ by $S_i(\sG)$. 
 A strategy $\phi$ is \emph{memoryless} if $\phi(w_1) = \phi(w_2)$
 for any two runs $w_1, w_2$ ending in the same node of $V_i$, and \emph{non-randomized} if for every run $w$ such that $\phi(w)$ is defined there is a node $x$ such that $\phi(w)(x) = 1$. Given strategies $\phi_1, \phi_2$ for 
 Players 1 and 2, the {\em value} $\val{w}{}_{\sG[\phi_1, \phi_2]}$ of a run $w$ under $\phi_1, \phi_2$ is defined as follows:
  \begin{itemize}
   \item If $w = s_0$, then $\val{w}{}_{\gpp} = 1$.
   \item If $w = w' s \in V^\ast V_i $ for $i \in \{1,2\}$ and  $\phi_i(w')$ is defined,
    then $\val{w}{}_{\gpp} = \val{w'}{}_{\gpp}\cdot \phi_i(w')(s)$.
   \item If $w = w' s' s$ for some run $w's' \in V^\ast V_p$ then \\$\val{w}{}_{\gpp} = \val{w's'}{}_{\gpp} \cdot \delta(s')(s)$.
   \item Otherwise $\val{w}{}_{\gpp} = 0$.
 \end{itemize}
\end{definition}
\andreas{Definition in 2+3 aufgeteilt}
We are interested in \emph{probabilistic reachability}:
\begin{definition}
The \emph{probability} $\reach{\gpp, F}$ \emph{of reaching $F \subset V_1$} in $\sG$ under
strategies $\phi_1$ and $\phi_2$ of Players 1 and 2 is
\begin{equation*}
 \reach{\gpp, F}{} := \sum_{w \in \cyl{\sG, F}}{\val{w}{}_{\sG[\phi_1, \phi_2]}}.
\end{equation*}
\end{definition}
If the context is clear, we often omit the subscript of $\val{\cdot}{}$.
We write $\cyl{\gpp}$ (resp. $\cyl{\gpp, F}$) for the set of all finite runs $r \in \cyl{\sG}$ (resp. $r \in \cyl{\sG, F}$)
with $\val{r}{}_{\gpp} > 0$.
In a MDP $\sM$ we do not require to have a strategy for the second Player. Here we just write 
$\reach{\sM[\phi_1], F}$ for a given strategy $\phi_1 \in S_1(\sM)$. 
\begin{definition}
 Let $\sG = ((V_1, V_2, V_p), E, \delta, s_0)$ be a 2-Player game, and $F \subset V_1$. The \emph{extremal game values} $\reach{\sG, F}^{++}, \reach{\sG, F}^{+-}, \reach{\sG, F}^{-+}$ and $\reach{\sG, F}^{--}$ are
\begin{align*}
  \reach{\sG, F}^{++} &:= \sup_{\phi_1 \in S_1(\sG)} \sup_{\phi_2\in S_2(\sG)} \reach{\gpp, F}{} \\
  \reach{\sG, F}^{+-} &:= \sup_{\phi_1 \in S_1(\sG)} \inf_{\phi_2\in S_2(\sG)} \reach{\gpp, F}{} \\
  \reach{\sG, F}^{-+} &:= \inf_{\phi_1 \in S_1(\sG)} \sup_{\phi_2\in S_2(\sG)} \reach{\gpp, F}{} \\
  \reach{\sG, F}^{--} &:= \inf_{\phi_1 \in S_1(\sG)} \inf_{\phi_2\in S_2(\sG)} \reach{\gpp, F}{}
\end{align*} 
If $\sG$ is a MDP, we define $\reach{\sG, F}^{+} := \reach{\sG, F}^{++} = \reach{\sG, F}^{+-}$ and 
$\reach{\sG, F}^{-} := \reach{\sG, F}^{--} = \reach{\sG, F}^{-+}$.
\end{definition}

The following well-known theorem will be crucial for the validity of our abstractions~\cite{condon1}:
\begin{theorem}
Let $F \subset V_1$.
 For each $\kappa \in \{++,+-,-+,--\}$ there exist \emph{non-randomized and memoryless} strategies
 $\phi_1^\kappa \in S_1(\sG), \phi_2^\kappa \in S_2(\sG)$ such that 
\begin{equation*}
\reach{\sG, F}^{\kappa} = \reach{\sG[\phi_1^\kappa, \phi_2^\kappa], F}. 
\end{equation*}
\end{theorem}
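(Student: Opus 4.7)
The plan is to prove each of the four cases separately, reducing two of them to MDPs and handling the other two as genuine zero-sum reachability games.

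For the ``cooperative'' cases $\kappa\in\{++,--\}$ both players share an objective, so the second player can be collapsed into the first. Concretely, I would define an auxiliary MDP $\sM = ((V_1 \cup V_2, V_p), E, \delta, s_0)$ in which every non-probabilistic node is owned by a single player. Any pair of strategies $(\phi_1,\phi_2)$ in $\sG$ induces a strategy $\phi$ in $\sM$ with the same run distribution, and conversely; hence $\reach{\sG, F}^{++} = \reach{\sM, F}^{+}$ and $\reach{\sG, F}^{--} = \reach{\sM, F}^{-}$. The classical theorem that reachability in MDPs admits non-randomized memoryless optimal strategies (see e.g.\ Puterman, Ch.\ 7, or the cited~\cite{condon1}) then provides $\phi^\kappa \in S_1(\sM)$, which splits back into $\phi_1^\kappa$ and $\phi_2^\kappa$ for $\sG$ of the desired form.

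For the adversarial cases $\kappa \in \{+-,-+\}$ I would use the Bellman characterization. Focusing on $+-$, define the operator $T:[0,1]^V \to [0,1]^V$ by
\begin{align*}
(Tf)(s) &= 1 && \text{if } s \in F,\\
(Tf)(s) &= \sup\{ f(s') : s \to s'\} && \text{if } s \in V_1 \setminus F,\\
(Tf)(s) &= \inf\{ f(s') : s \to s'\} && \text{if } s \in V_2 \setminus F,\\
(Tf)(s) &= \sum_{s'} \delta(s)(s') f(s') && \text{if } s \in V_p.
\end{align*}
Let $v$ be the least fixed point of $T$, obtained by iterating $T$ from $\vzero$ (monotone convergence). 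A standard value-iteration argument shows $v(s_0) = \reach{\sG,F}^{+-}$: the $n$-th iterate equals the supremum/infimum of reachability probabilities over plays of length at most $n$, and the optimal value is the limit. Witness strategies are read off from $v$: define $\phi_1^{+-}(w)$ at a run $w$ ending in $s\in V_1$ to concentrate on a successor $s'$ attaining the supremum of $v$ at $s$, and analogously for $\phi_2^{+-}$ using the infimum. These strategies are by construction memoryless and non-randomized, and a direct induction on run length shows that under $(\phi_1^{+-},\phi_2^{+-})$ the reachability probability equals $v(s_0)$. The case $-+$ is symmetric, exchanging the roles of sup and inf.

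The main obstacle is the countability of $V$: the sup and inf above need not be attained, so a naive choice of successor may fail, and the least-fixed-point iteration need not converge in finitely many steps. In the finite-state case this is the classical result of Condon~\cite{condon1}; for the countable case one either restricts attention to finitely branching arenas at Player~1 and Player~2 nodes (which is the case for the arenas produced by our abstraction, since the set of commands is finite and each update command has finite support), so that the sup/inf are actually attained, or appeals to general determinacy results for stochastic games with reachability payoffs. In our setting the former suffices, and the construction above then yields memoryless non-randomized optimal strategies.
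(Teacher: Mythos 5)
The paper does not prove this statement at all: it is quoted as a well-known result with a pointer to the cited literature (Condon, Puterman), and it is only ever applied to valid abstractions, which are required to have finitely many nodes. Your attempt to supply an actual proof is therefore necessarily a different route; the reduction of the cooperative cases $\kappa\in\{++,--\}$ to a single-player MDP by merging $V_1$ and $V_2$ is sound, since pairs of strategies and merged strategies induce the same run distributions. But the argument for the adversarial cases has a genuine gap: defining $\phi_1^{+-}$ to ``concentrate on a successor attaining the supremum of $v$'' does not yield an optimal strategy even in \emph{finite} games. Take $V_1=\{s_0\}$, a target $t\in F$, and edges $s_0\to s_0$ and $s_0\to t$: then $v(s_0)=1$, yet the greedy rule may select the self-loop forever and reach $F$ with probability $0$. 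Optimality of the memoryless strategy for the player maximizing reachability needs a tie-breaking argument (e.g.\ preferring successors with strictly smaller attractor rank, or reading the choice off the first value-iteration stage at which the value is approximated), and your ``direct induction on run length'' cannot substitute for it. Establishing that the least fixed point equals $\reach{\sG,F}^{+-}$ \emph{and} $\reach{\sG,F}^{-+}$-style exchanges of $\sup$ and $\inf$ is precisely the content of Condon's theorem, so this part of your proof assumes what it should prove.

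The countability escape hatch also fails. Finite branching guarantees that the $\sup$ and $\inf$ inside the Bellman operator are attained, but not that the \emph{game value} is attained by any strategy: there are finitely branching, countably infinite MDPs with reachability value $1$ at $s_0$ in which every strategy reaches the target with probability strictly less than $1$ (a chain $s_0,s_1,\dots$ where from $s_n$ one may either move to $s_{n+1}$ or gamble and reach $F$ with probability $1-2^{-n}$). Hence for $\kappa\in\{++,+-\}$ the statement is false for general countable $\sG$ as defined in the paper, and no argument along your lines can close this. The honest repair is to restrict the theorem to finite arenas --- which is the setting of Condon's result and the only setting in which the paper uses it --- or to cite determinacy and strategy-existence results for finite stochastic reachability games as the paper does, rather than reprove them.
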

Extremal game values can be computed e.g. by variants of value iteration~\cite{condon2}.

\section{Abstractions of Probabilistic Programs}
\label{sec:abstraction}
We start by giving a formal definition of NPPs.

  \begin{definition}
  Let $\sV$ be a finite set of variables, where $x \in \sV$ has a range $\text{rng}(x)$.
  A \emph{configuration} (or state) of $\sV$ is a map $\sigma \colon \sV \rightarrow \bigcup_{x \in \sV}{\rng(x)}$ such that $\sigma(x) \in \rng(x)$ for all $x \in \sV$.  The set of all configurations is denoted by $\Sigma_\sV$. A {\em transition} is a map $f \in 2^{\Sigma_\sV} \rightarrow 2^{\Sigma_\sV}$  such that  $\abs{f(\{\sigma\})} \leq 1$ 
  for all $\sigma \in \Sigma_\sV$ (i.e., a transition maps a single configuration
  to the empty set or to a singleton again), and
  \begin{equation*} 
   \bigcup_{\sigma \in M}{f(\{ \sigma \})} = f(M) \text{ for all } M \subseteq \configs.
  \end{equation*}

  A transition $g$ is a \emph{guard} if $g(\{\sigma\}) \in \{ \{\sigma\},  \emptyset\}$ for every configuration $\sigma$. We say that $\sigma$ {\em enables} $g$ if $g(\{\sigma\}) = \{\sigma\}$.
  A transition $c $ is an \emph{assignment} if $\abs{c(\{\sigma\})} = 1$  for all $\sigma \in \Sigma_\sV$. The semantics of an assignment $c$ is the map $\sem{c}: \configs \rightarrow \configs$ given by $\sem{c}(\sigma) := \sigma'$ if $c(\{\sigma\}) = \{\sigma'\}$.
  The set of transitions is denoted by $\trans$.
 \end{definition}

\begin{definition}{Nondeterministic Probabilistic Programs.}\\
A \emph{nondeterministic probabilistic program} (NPP) is a triple $P = (\sV, \sigma_0, \sC)$ where 
  $\sV$ is a finite set of program {\em variables}, $\sigma_0 \in \Sigma_\sV$ is the {\em initial configuration}, 
  and $\sC $ is a finite set of \emph{guarded commands}.
  A \emph{guarded command} $A$ has the form 
  $A = g \rightarrow p_1: c_1 + \ldots + p_m: c_m$, where $m \geq 1$, 
  $g$ is a guard, $p_1, \ldots, p_m$ are probabilities adding up to $1$,
  and $c_1, \ldots, c_m$ are assignments. We denote the guard of 
  $A$ by $g_A$, the {\em updates} $\{\tup{p_1, c_1}, \ldots \tup{p_m, c_m}\}$
  of $A$ by $up_{A}$, and the set $\{up_{A} \mid A \in \sC\}$ by $up_\sC$.
 \end{definition}

\begin{definition}{Semantics of NPPs and Reachability Problem.}\\
The {\em MDP associated to} a NPP $P = (\sV, \sigma_0, \sC)$ is 
$\sM_P = ((V_1, V_p), E, \delta, \sigma_0)$, where
$V_1 = \Sigma_\sV$, $V_p =  \Sigma_\sV \times \sC$, $E \,\subseteq V_1 \times (V_1 \cup V_p)$, $\delta: (\Sigma_\sV \times \sC)  \rightarrow \dist(V_1)$, and  
for every $A \in \sC$, $\sigma, \sigma' \in \Sigma_\sV$
\begin{equation*}
  \sigma \rightarrow \tup{\sigma, A} \text{ iff } \sigma \text{ enables } g_A \text{ and } 
  \delta(\tup{\sigma, A})(\sigma') := \sum_{\tup{p,c} \in up_A:\ 
  \sem{c}(\sigma) = \sigma'}{p}\ .
\end{equation*}
\noindent The {\em reachability problem} for $P$ relative to a set $F \subseteq \Sigma_\sV$ of states such that
$\sigma_0 \notin F$ is the problem of computing $\reach{\sM_P, F}^{+}$ and $\reach{\sM_P, F}^{-}$. We call $F$ the set of {\em final states}.
\end{definition}
\noindent We assume in the following that for every run $w \sigma \in V^\ast V_1$ in $\sM_P$ 
either $\sigma \in F$ or $\sigma$ enables the guard of at least one command (i.e., we do not 'get stuck' during the computation). 
This can e.g. be achieved by adding a suitable guarded command that simulates a self loop.

\subsection{Abstracting NPPs}
\label{subsec:abstractnpp}
We abstract NPPs using the Abstract Interpretation framework (see~\cite{cousot}). 
As usual, an abstract domain is a complete lattice
$(D^\sharp, \sqsubseteq, \top, \bot, \sqcup, \sqcap)$ (short $D^\sharp$), and we assume the existence of
monotone abstraction and concretization 
maps $\alpha: 2^{\Sigma_\sV} \rightarrow D^\sharp$ and
$\gamma: D^\sharp \rightarrow 2^{\Sigma_\sV}$ forming a Galois connection between $D^\sharp$ and $2^{\Sigma_V}$.
A \emph{widen operator} is a mapping $\nabla: D^\sharp \times D^\sharp \rightarrow D^\sharp$ 
satisfying (i) $a \nabla b \sqsupseteq a$ and $a \nabla b \sqsupseteq b$ for all $a,b \in D^\sharp$,
and (ii) for every strictly increasing sequence  $a_0 \sqsubset a_1 \sqsubset \ldots $ in $D^\sharp$
the sequence $(b_i)^{i \in \mathbb{N}}$ defined by $b_0 = a_0$ and $b_{i+1} = b_i \, \nabla \, a_{i+1}$ is
stationary.

We abstract sets of configurations by elements of $D^\sharp$.
Following ideas from~\cite{kattenbelt,wachter,hermanns,bjoern-phd}, the abstraction of an 
NPP is a 2-player stochastic game. We formalize which games are \emph{valid abstractions} of a given NPP
(compare the definition to the comments in Section~\ref{sec:ex_arena_constr}):

\begin{definition}
\label{def:validabstraction}
Let  $P = (\sV, \sigma_0, \sC)$ be a NPP with a set $F \subseteq \Sigma_\sV$ of
final states such that $\sigma_0 \notin F$.
A 2-player game $\sG = ((V_1, V_2, V_p), E, \delta, s_0)$ with finitely many nodes is a \emph{valid abstraction} of $P$ relative to $F$ for $D^\sharp$ if 
\begin{itemize}
 \item $V_1$ contains a subset of $D^\sharp$ plus two distinguished states
       $\varocircle, \varotimes$;
 \item $V_2$ is a set of pairs $\langle s, A \rangle$, where $s \in V_1 \setminus \{\varocircle, \varotimes\}$ and either 
       $A = \varocircle$ or $A$ is a command of $\sC$ enabled by some state of $\gamma(s)$;
 \item $V_p$ is a set of fourtuples $\langle s, A, s', d \rangle$, where $s, s' \in V_1 \setminus \{\varocircle, \varotimes\}$ 
       such that $s \sqsupseteq s'$, $A$ is a command enabled by some state of $\gamma(s')$, and 
       $d$ is the mapping that assigns to every update $\langle p, c \rangle \in up_A$ an abstract state $s' \in V_1$ with $\gamma(d(\langle p, c \rangle)) \supseteq c(\gamma(s'))$;
 \item $s_0 = \alpha(\{\sigma_0\})$;
\end{itemize}
\noindent and the following conditions hold:
\begin{enumerate}
\item For every $s \in V_1 \setminus \{\varocircle, \varotimes\}$ and every $A \in \sC$:
\begin{itemize}
 \item[(a)] If $\gamma(s) \cap F \not=\emptyset$ then
       $s \rightarrow \tup{s, \varocircle} \rightarrow \varocircle$. If moreover
       $\gamma(s) \subseteq F$, then $\tup{s, \varocircle}$ is the only successor of $s$; otherwise, also
       $\tup{s, \varocircle} \rightarrow \tup{s, \varocircle}$ holds.
       \\
       (*
	If $\gamma(s)$ contains some final state, then Player 1 can propose $\varocircle$.
        If all states of $\gamma(s)$ are final, then Player 2 must accept, otherwise it can
        accept, or reject by staying in $\tup{s, \varocircle}$.
       *)
 \item[(b)] If  $g_A(\gamma(s)) \not=\emptyset$ then $\tup{s, A} \in V_2$ and $s \rightarrow \tup{s, A}$. 
       \\
       (*
       If some state of $\gamma(s)$ enables $A$ then Player 1 can propose $A$.
       *)
\end{itemize}
\item For every pair $\tup{s, A} \in V_2$ and every $A \in \sC$:
\begin{itemize}
 \item[(a)] there exist nodes $\{\tup{s, A, s_1, d_1}, \ldots, \tup{s, A, s_k, d_k}\} \subseteq V_p$
   such that $\tup{s,A} \rightarrow \tup{s, A, s_i, d_i}$ for every $i \leq k$ and $g_A(\gamma(s)) \subseteq \bigcup_{j=1}^k \gamma(s_j)$.
   \\
   (*
     If Player 2 accepts $A$, then she can
     pick any concrete state $\sigma \in \gamma(s)$ enabling $A$, and choose a successor
     $\tup{s, A, s_i, d_i}$ such that $\sigma \in \gamma(s_i)$.
   *)
 \item[(b)] If $\gamma(s) \cap F \not= \emptyset$, then $\tup{s, A} \rightarrow \varocircle$.\\
 (* If $\gamma(s)$ contains some final state, then Player 2 can reject and move to $\varocircle$. *)
 \item[(c)] If $g_A(\gamma(s)) \not= \gamma(s)$, then $\tup{s, A} \rightarrow \varotimes$. \\
 (* If some state of $\gamma(s)$ does not enable $A$, then Player 2 can reject $A$ and 
    move to $\varotimes$. *)
\end{itemize}
\item For every $\tup{s,A,s',d} \in V_p$ and every abstract state $s'' \in V_1$:
\begin{equation*}
  \delta(\tup{s, A, s', d})(s'') := \sum_{\tup{p,c} \in up_A \colon d(\tup{p,c}) = s''}{p} \ .
\end{equation*}
\item The states $\varotimes$ and $\varocircle$ have no outgoing edges.
\end{enumerate}
\end{definition}

We can now state the main theorem of the paper:
the extremal game values of the games derived from valid abstractions provide upper and lower bounds
on the maximal and minimal reachability probabilities. The complete proof is given in~\iftechrep{ the appendix}{\cite{EG:sasTechRep}}.

\begin{theorem}
\label{thm:main}
 Let $P$ be a NPP and let $\sG$ be a valid abstraction of $P$ relative to $F$ for the abstract domain
$D^\sharp$. Then 
\begin{align*}
 \reach{\sM_P, F}^{-} &\in [\reach{\sG, \{\varocircle, \varotimes\}}^{--}, \reach{\sG, \{\varocircle, \varotimes\}}^{-+}] \text{ and } \\
 \reach{\sM_P, F}^{+} &\in [\reach{\sG, \{\varocircle\}}^{+-}, \reach{\sG, \{\varocircle\}}^{++}].
\end{align*}
\end{theorem}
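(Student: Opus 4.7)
The plan is to prove all four inequalities uniformly via a fixed-point argument that relates the value function of $\sG$ to that of $\sM_P$ through $\gamma$. Let $U^{+}(\sigma)$ and $U^{-}(\sigma)$ denote the maximal and minimal probabilities of reaching $F$ in $\sM_P$ from $\sigma$, and let $V^{\kappa\lambda}(s)$ denote the extremal value of $\sG$ from $s \in V$, identified with the least fixed point of the natural Bellman operator (with accepting set $\{\varocircle\}$ for the max-bound games and $\{\varocircle,\varotimes\}$ for the min-bound games). I would establish the four pointwise bounds $V^{++}(s) \geq U^{+}(\sigma)$, $V^{+-}(s) \leq U^{+}(\sigma)$, $V^{--}(s) \leq U^{-}(\sigma)$, and $V^{-+}(s) \geq U^{-}(\sigma)$, valid for every $s \in V_1$ and every $\sigma \in \gamma(s)$. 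Since $\sigma_0 \in \gamma(\alpha(\{\sigma_0\})) = \gamma(s_0)$ by the Galois connection, specializing at $(s_0,\sigma_0)$ immediately yields the theorem.

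The induction has essentially the same shape for every bound. Take $V^{+-}(s) \leq U^{+}(\sigma)$ with $\sigma \in \gamma(s)$: if $\sigma \in F$, the bound is trivial because $U^{+}(\sigma) = 1$. Otherwise, for each command $A$ that is not universally enabled on $\gamma(s)$, the adversarial Player~2 escapes to $\varotimes$ by Definition~\ref{def:validabstraction}(2c), giving $V^{+-}(\tup{s,A}) = 0$; only universally enabled $A$'s can contribute. For such an $A$, $\sigma$ enables $A$, so by Definition~\ref{def:validabstraction}(2a) together with $g_A(\gamma(s)) \subseteq \bigcup_j \gamma(s_j)$ there is an index $j$ with $\sigma \in \gamma(s_j)$; the condition $\gamma(d_j(\tup{p,c})) \supseteq c(\gamma(s_j))$ then yields $\sem{c}(\sigma) \in \gamma(d_j(\tup{p,c}))$, and the inductive hypothesis supplies $V^{+-}(d_j(\tup{p,c})) \leq U^{+}(\sem{c}(\sigma))$. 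Using this $j$ as the witness for Player~2's minimum,
\[
V^{+-}(\tup{s,A}) \;\leq\; \sum_{\tup{p,c} \in up_A} p \cdot V^{+-}(d_j(\tup{p,c})) \;\leq\; \sum_{\tup{p,c} \in up_A} p \cdot U^{+}(\sem{c}(\sigma)),
\]
and taking the maximum over universally enabled $A$'s (a subset of the MDP-enabled commands $B$ for $\sigma$) gives $V^{+-}(s) \leq \max_B \sum_{\tup{p,c} \in up_B} p \cdot U^{+}(\sem{c}(\sigma)) = U^{+}(\sigma)$.

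The three remaining bounds follow by symmetric arguments using the same correspondence: for $V^{++} \geq U^{+}$ the cooperating Player~2 deliberately picks the consistent index $j$ with $\sigma \in \gamma(s_j)$, so her maximum is at least the MDP value; for $V^{--} \leq U^{-}$ the $(+-)$ argument is mirrored with Player~1 also minimizing; and for $V^{-+} \geq U^{-}$ the maximizing Player~2 exploits $\varotimes$ whenever $\sigma$ fails to enable Player~1's chosen command, or $\varocircle$ when $\gamma(s) \cap F \neq \emptyset$, via Definition~\ref{def:validabstraction}(2b,~2c). The main obstacle is not combinatorial but a fixed-point setup issue: because of the self-loops at $\tup{s,\varocircle}$ the Bellman operator admits multiple fixed points, so each extremal game value must be identified with the appropriate one (least or greatest, depending on the direction) before the induction can be run. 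This is standard for stochastic 2-player reachability games on finite state spaces, and is the step on which the uniformity of the argument hinges.
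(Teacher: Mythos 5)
Your proof is correct in its essentials, but it takes a genuinely different route from the paper's. The paper proves the theorem through a strategy-transfer lemma (Lemma~\ref{th:strategy-construction}): for each strategy $\phi$ of $\sM_P$ it explicitly constructs strategies $\phi_1,\phi_2$ of $\sG$ (and conversely), mediated by a ``mass distribution'' function $D$ on pairs of runs that spreads the probability of each abstract run over the concrete runs it covers, and then proves equality of the relevant reachability probabilities by induction on run length, after normalizing $\sM_P$ with absorbing states $\sigma_f$ and $\sigma_{\varotimes}$. You instead argue at the level of value functions, proving the four pointwise inequalities between $V^{\kappa\lambda}(s)$ and $U^{\pm}(\sigma)$ for every $\sigma\in\gamma(s)$ by a local Bellman step and specializing at $(s_0,\sigma_0)$ via $\sigma_0\in\gamma(\alpha(\{\sigma_0\}))$. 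Interestingly, the authors state that the valuation-based proofs of the predicate-abstraction papers do not adapt here because non-disjoint concretizations preclude a Galois connection between concrete and abstract valuators; your universally quantified pointwise formulation sidesteps that obstacle, since condition (2a) only has to supply \emph{some} covering successor $s_j$ with $\sigma\in\gamma(s_j)$, and Player~2's min (resp.\ max) compares against that single witness in the correct direction in all four games. What the paper's heavier route buys is a per-strategy correspondence, strictly stronger than the value inequalities and reusable for strategy-based refinement; what yours buys is brevity and uniformity, at the price of two points you must still nail down. First, the induction as displayed on $V^{+-}$ (and likewise $V^{--}$) is circular, since the game has cycles; it must be run on value-iteration stages using that these values are \emph{least} fixed points --- your closing ``least or greatest, depending on the direction'' should resolve to ``least'' throughout, with the two lower bounds ($V^{++}\geq U^{+}$, $V^{-+}\geq U^{-}$) instead inducting on the horizon of the MDP and needing only that the game values are fixed points. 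Second, that horizon induction requires $U^{-}=\lim_n U^{-}_n$ for the countably infinite $\sM_P$, which holds because $\sM_P$ is finitely branching but deserves a sentence of justification.
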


\begin{proof} 
({\it Sketch.)} The result is an easy consequence of the following three assertions:
\begin{itemize}
\item[(1)] Given a strategy $\phi$ of the (single) player in $\sM_P$, there exists a strategy $\phi_1 \in S_1(\sG)$ such that
 \begin{equation*}
   \inf_{\psi \in S_2(\sG)} \reach{\sG[\phi_1, \psi], \{\varocircle, \varotimes\}} \leq \reach{\sM_P[\phi], F} \text{ and } 
  \end{equation*}
\begin{equation*}
 \sup_{\psi \in S_2(\sG)} \reach{\sG[\phi_1, \psi], \{\varocircle\}} \geq \reach{\sM_P[\phi], F}  .
\end{equation*}
\item[(2)] Given a strategy $\phi_1 \in S_1(\sG)$ there exists a strategy $\phi \in S_1(\sM_P)$ such that
\begin{equation*}
     \reach{\sM_P[\phi], F}{} \leq \sup_{\psi \in S_2(\sG)}  \reach{\sG[\phi_1, \psi], \{\varotimes, \varocircle\}}{}.
\end{equation*}
\item[(3)] Given a strategy $\phi_1 \in S_1(\sG)$ there exists a strategy $\phi \in S_1(\sM_P)$ such that
\begin{equation*}
   \reach{\sM_P[\phi], F}{} \geq \inf_{\psi \in S_2(\sG)}  \reach{\sG[\phi_1, \psi], \{\varocircle\}}{}.
\end{equation*}
\end{itemize}
\noindent To prove (1) (the other two assertions are similar), 
we use $\phi$ to define a function $D$
that distributes the probabilistic mass of a run $R \in \cyl{\sG}$
among all the runs $r \in \cyl{\sM}$ (where $\sM$ is a normalization of $\sM_P$). 
The strategies
$\phi_1$ and $\phi_2$ are then chosen so that they produce the same distribution, i.e., the mass of all the runs $r$ that follow the strategies and correspond to an abstract run $R$ following $\phi$
is equal to the mass of $R$.
\qed
\end{proof}
Recall that in predicate abstraction the 
concretizations of two abstract states are disjoint sets of configurations
(disjointness property). 
This allows to easily define a Galois connection between the sets of
functions assigning values to the abstract and the concrete states:
Given a concrete valuator $f$, its abstraction is the function that assigns
to a set $X$ the minimal resp. the maximal value assigned by $f$ to
the elements of $X$. Here we 
have to distribute the value of a concrete state 
into multiple abstract states (which is what we do in our proof).

\subsection{An Algorithm for Constructing Valid Abstractions}
Algorithm~\ref{alg:compute-g} builds a valid abstraction $\sG$ of a NPP $P$ relative 
to a set $F$ of final states for a given abstract domain $D^\sharp$.
It is inspired by the algorithms of \cite{blast,gulavani} for constructing abstract 
reachability trees. It constructs the initial state 
$s_0 = \alpha(\{\sigma_0\})$ and generates transitions and successor states 
in a breadth-first fashion using a work list called $\worklist$. The 
\texttt{GENSUCCS} procedure constructs the successors of a node guided by the rules from Def.~\ref{def:validabstraction}.
It uses abstract transformers $g^\sharp$ and $c^\sharp$ for the guards and commands 
of the NPP. Hereby a transformer $g^\sharp: D^\sharp \rightarrow 2^{D^\sharp}$ abstracting a guard $g$ has to satisfy that
{for all } $a \in D^\sharp$, $g^\sharp(a)$ is finite and $\bigcup_{b \in g^\sharp(a)}{\gamma(b)} \supseteq g(\gamma(a))$.
Allowing $g^\sharp$ to return a set rather than just one element from $D^\sharp$ can help increasing the accuracy of $\sG$. Here we implicitly make use of abstract powerset domains.
\texttt{GENSUCCS} assumes that it can be decided whether $\gamma(s) \cap F = \emptyset$,
$\gamma(s) \not\subseteq F$, $g_A(\gamma(s)) \neq \emptyset$ or $g_A(\gamma(s)) \neq \gamma(s)$ hold (lines 2 and 4). The assumptions on $F$ are reasonable, since in most cases the set $F$ has a very simple
shape, and could be replaced by conservative tests on the abstract. 
A conservative decision procedure suffices for the test $g_A(\gamma(s)) \not= \gamma(s)$, 
\andreas{(similarly with  $g_A(\gamma(s)) \neq \emptyset$) eingefuegt} with the only requirement that
if it returns 0, then  $g_A(\gamma(s)) = \gamma(s)$ has to hold. The same holds for the test $g_A(\gamma(s)) \neq \emptyset$.
\texttt{GENSUCCS} closely follows the definition of a valid abstraction, as specified in Def.~\ref{def:validabstraction}. 

Lines 1 and 2 guarantee that condition (1a) of Def.~\ref{def:validabstraction} holds, and, 
similarly, line 4 guarantees condition (1b). Similarly, 
lines 3 and 5 are needed to satisfy conditions (2b) and (2c), respectively.
The loop at line 6 generates the nodes of the form $\tup{s,A,s_i,d}$ required by condition (2a) of our definition, 
and the loop at line 7 constructs the function $d$ appearing in  condition 3.

As usual, termination of the algorithm requires to use widenings. This is the role
of the \texttt{EXTRAPOLATE} procedure. During the construction, we use the function $\pred(\cdot)$ 
to store for every node $s \in V_1\setminus\{s_0,\varocircle, \varotimes\}$ 
its predecessor in the spanning tree induced by the construction (we call it \emph{the spanning tree} from now on). 
For a node $s' \in V_1$ that was created as the result of 
chosing a guarded command $A$, the procedure finds the nearest predecessor $s$
in the spanning tree with the same property, and uses $s$ to perform a widen operation. 
\andreas{Etwas umgeschrieben}
Note that in the introductory example, another strategy was used: There we applied widenings only for states with matching control location. 
The strategy used in \texttt{EXTRAPOLATE} does not use additional information like control flow and thus can be used for arbitrary NPPs.

\noindent We can now prove (see\iftechrep{~the appendix}{~\cite{EG:sasTechRep}}):
\begin{theorem}
\label{thm:termination}
Algorithm~\ref{alg:compute-g} terminates, and its result $\sG$ is a valid abstraction.
\end{theorem}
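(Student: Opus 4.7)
The plan is to split Theorem~\ref{thm:termination} into two independent claims: (a) the output $\sG$ satisfies every condition of Def.~\ref{def:validabstraction}, and (b) the main loop of Algorithm~\ref{alg:compute-g} terminates, i.e.\ only finitely many $V_1$-nodes are ever created. Claim (a) follows from an almost mechanical inspection of \texttt{GENSUCCS}, while claim (b) requires a König-style argument combined with an appeal to the widening property.

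For claim (a), I would walk through Def.~\ref{def:validabstraction} item by item, matching each clause to the corresponding line of \texttt{GENSUCCS}. Lines~1--2 install $\tup{s,\varocircle}$ and the edges required by condition~(1a) precisely when $\gamma(s)\cap F \neq \emptyset$; line~4 creates the $\tup{s,A}$ node exactly when $g_A(\gamma(s)) \neq \emptyset$, giving~(1b); the loop at line~6 generates the probabilistic successors $\tup{s,A,s_i,d_i}$ whose concretizations cover $g_A(\gamma(s))$, by the over-approximation property of $g^\sharp$, yielding~(2a); lines~3 and~5 add the edges to $\varocircle$ and $\varotimes$ under their respective guards, giving~(2b) and~(2c); the inner loop at line~7 fills in the update map $d$ via the abstract command transformer $c^\sharp$, establishing~(3); and condition~(4) is immediate since $\varocircle$ and $\varotimes$ never appear as sources in \texttt{GENSUCCS}. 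The constraint $s \sqsupseteq s'$ at $V_p$-nodes follows from the fact that $g^\sharp$ is applied before the result is passed downward.

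For claim (b), suppose for contradiction that $V_1$ is infinite. \texttt{GENSUCCS} produces only finitely many children per $V_1$-node: at most one $\tup{s,A}$ branch per command $A\in \sC$ plus the $\tup{s,\varocircle}$ branch, and each $\tup{s,A}$ has finitely many probabilistic successors since $g^\sharp$ returns a finite set and $up_A$ is finite. Hence the spanning tree induced by $\pred$ has finite branching, and König's lemma yields an infinite branch. Since $\sC$ is finite, some command $A$ is applied infinitely often along that branch, producing a sub-sequence $t_1, t_2, \ldots$ of $V_1$-nodes where \texttt{EXTRAPOLATE} computes each $t_{k+1} = t_k \,\nabla\, m_{k+1}$ with $t_k$ the nearest $A$-ancestor and $m_{k+1}$ the raw image. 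Because \texttt{GENSUCCS} re-uses existing abstract states rather than duplicating them, the $t_k$ must be pairwise distinct, so by property~(i) of the widening the chain $t_1 \sqsubset t_2 \sqsubset \cdots$ is strictly increasing. Property~(ii) applied to this chain then yields the contradiction.

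The main obstacle is making this last step fully rigorous. Property~(ii) concerns iterations $b_{k+1} = b_k \,\nabla\, a_{k+1}$ for a strictly increasing \emph{input} sequence $(a_k)$; in our situation the role of input and output coincide in an awkward way, since $t_{k+1}$ is obtained by widening $t_k$ with the raw $m_{k+1}$, not with the $t_k$ themselves. To bridge the gap cleanly, I would either invoke the standard stronger formulation of widening (that the iteration stabilizes for \emph{any} input sequence), or, staying within property~(ii) as stated, feed the chain $(t_k)$ itself as input and derive an upper bound $c^\ast$ on the entire sequence, then use a subsequence argument together with the strict increase guaranteed by duplicate detection to reach a contradiction. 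This bookkeeping is the one piece of the proof that genuinely requires care; everything else is bookkeeping around the definitions.
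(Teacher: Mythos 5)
Your proposal follows essentially the same route as the paper: validity by clause-by-clause inspection of \texttt{GENSUCCS}, and termination by K\"onig's lemma, pigeonhole on the finitely many commands, and the widening property applied along the resulting branch; your observation that node re-use forces the $t_k$ to be pairwise distinct (hence $t_1 \sqsubset t_2 \sqsubset \cdots$) is exactly the point the paper uses to derive the final contradiction. The one step you leave open --- how to reconcile property~(ii), which demands a strictly increasing \emph{input} sequence, with the fact that the algorithm widens against the raw images --- is closed in the paper by a small device you almost name: \texttt{EXTRAPOLATE} returns $t_k \nabla (t_k \sqcup v_k)$, not $t_k \nabla v_k$, so one sets $a_0 := t_0$ and $a_{k+1} := t_k \sqcup v_k$ and observes that $a_k \sqsubseteq t_k \sqsubseteq a_{k+1}$ (the first inequality by widening property~(i), since $t_k = t_{k-1}\nabla a_k$), which sandwiches the input sequence between consecutive terms of the strictly increasing chain $(t_k)$ and exhibits the algorithm's iteration as an instance $t_{k+1} = t_k \nabla a_{k+1}$ of the scheme in property~(ii). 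Your first suggested repair (assuming stabilization for arbitrary monotone inputs) would also work but is stronger than what the paper's definition grants; your second is the right instinct, and the $\sqcup$ inside \texttt{EXTRAPOLATE} is precisely what makes it go through. (The paper itself is terse here --- it only verifies that $(a_k)$ is monotone, not strictly increasing, so your flagging of this step as the delicate one is well taken.)
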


\begin{algorithm}[htb]
\label{alg:compute-g}
\DontPrintSemicolon
\KwIn{NPP $P = (\sV, \sigma_0, \sC)$, 
abstract domain $D^\sharp$, set of final states $F \subseteq \Sigma_{\sV}$, 
widening $\nabla$.}
\KwOut{2-Player game $\sG = ((V_1, V_2, V_p), E, \delta, s_0)$.}

\vspace{0.2cm}
$s_0 = \alpha(\{\sigma_0\}); \; \text{pred}(s_0) \leftarrow \text{nil}$\;
$V_1 \leftarrow \{s_0, \varotimes, \varocircle\}; \; V_2 \leftarrow \emptyset; \; V_p \leftarrow \emptyset;\;\worklist \leftarrow \{s_0\}$\;
\While{$\worklist \not= \emptyset$} {
  Remove $s$ from the head of $\worklist$; $\texttt{GENSUCCS}(s)$
}\;
{\bf Procedure} \texttt{GENSUCCS}($s \in V_1)$\;

 $\fopt \leftarrow {\it false}$\;
 \If{$\gamma(s) \cap F \not= \emptyset$} {
 \nl $E \leftarrow E \cup \{(s, \tup{s, \varocircle}), (\tup{s, \varocircle}, \varocircle)\}$\;
 \nl \lIf{$\gamma(s) \not\subseteq F$} {
    \{ $E \leftarrow E \cup \{(\tup{s, \varocircle}, \tup{s, \varocircle})\}; \; \fopt \leftarrow true$ \}\;  
  }
  \lElse{ return\;}
}
  \ForAll{$A \in \sC$} {
     \If{$g_A(\gamma(s))  \not= \emptyset$} {	
 	\nl $V_2 \leftarrow V_2 \cup \{\tup{{s},A}\}$; $E \leftarrow E \cup \{(s, \tup{{s},A})\}$\;
	\nl \lIf{$g_A(\gamma(s)) \neq \gamma(s)$} {
	  $E \leftarrow E \cup \{(\tup{{s}, A}, \varotimes)\}$\;
	}
       \nl \lIf{$\fopt$} {
          $E \leftarrow E \cup \{(\tup{{s}, A}, \varocircle)\}$\;
       }
       \nl \ForAll{$s' \in g_A^{\sharp}(s)$} {
          Create a fresh array $d: up_\sC \rightarrow V_1$\;
	  \nl \ForAll{$\tup{p,c} \in up_A$}
	  {
	    $v \leftarrow \texttt{EXTRAPOLATE}(c^\sharp(s), s, A); \; d(\tup{p,c}) \leftarrow v$\;
	    \lIf{$v \not\in V_1$} { \{ $V_1 \leftarrow V_1 \cup \{v\};\; \pred(v) = \tup{s,A}; \; \text{add $v$ to $\worklist$}$ \} \;}
	  }
	    $V_p \leftarrow V_p \cup \{\tup{s,A,s',d}\}; \; E \leftarrow E \cup \{(\tup{{s},A}, \tup{s,A,s',d})\}$\;
 	}
    }
  }

\vspace{0.1cm}
{\bf Procedure} \texttt{EXTRAPOLATE}($v \in D^\sharp, s \in V_1 \setminus \{\varocircle, \varotimes\}, A \in \sC$)\;

$\tup{s', A'} \leftarrow \pred(s)$\;
\While{$\pred(s') \not= \text{nil} $} {
  \lIf{$A' = A$}{
    return $s \nabla (s \sqcup v)$\;
  }\lElse {
   \{ $\text{buffer} \leftarrow s'$; $\tup{s', A'} \leftarrow \pred(s')$; $s \leftarrow \text{buffer}$ \}\;
  }
}
return $v$\;
\caption{Computing $\sG$.}
\label{alg:compute-mdp}
\end{algorithm}

\section{Refining Abstractions: Quantitative Widening Delay}
\label{sec:refinements}

Algorithm~\ref{alg:compute-g} applies the widening operator whenever 
the current node has a predecessor in the spanning tree that was created by 
the application applying the same guarded command. This strategy usually leads
to too many widenings and poor abstractions.
A popular solution in non-probabilistic abstract interpretation
is to delay widenings in an initial stage of the analysis~\cite{astree}, 
in our case until the spanning tree reaches a given depth. We call this 
approach \emph{depth-based unrolling}. Note that if $\sM_P$ is finite and the 
application of widenings is the only source of imprecision, this simple 
refinement method is complete.

A shortcoming of this approach is that it is insensitive to the probabilistic
information. We propose to combine it with another heuristic.
Given a valid abstraction $\sG$, our procedure yields two pairs 
$(\phi_1^+, \phi_2^+)$ resp. $(\phi_1^{-}, \phi_2^{-})$ 
of memoryless and non-probabilistic strategies that satisfy
$\reach{\sG[\phi_1^-, \phi_2^-], \{\varocircle\}} = \reach{\sG, \{\varocircle\}}^{+-}$ resp.
$\reach{\sG[\phi_1^+, \phi_2^+], \{\varocircle\}} = \reach{\sG, \{\varocircle\}}^{++}$.
Given  a node $s$ for Player 1, let $P_s^+$ and $P_s^-$ 
denote the probability of reaching $\varocircle$ (resp. $\varocircle$ or $\varotimes$ if we are interested
in minimal probabilities)
\andreas{Hinweis auf minimale wahrscheinlichkeiten}
 starting at $s$ and obeying
the strategies $(\phi_1^+, \phi_2^+)$ resp. $(\phi_1^-, \phi_2^-)$ in $\sG$.
In order to refine $\sG$ we can choose any node $s \in V_1 \cap D^\sharp$ 
such that $P_s^+ - P_s^- > 0$ (i.e., a node whose probability has not been
computed exactly yet), such that at least one of the direct successors of $s$
in the spanning tree has been constructed using a widening. We call
these nodes the {\em candidates} (for delaying widening). The question 
is which candidates to select. We propose to use the following simple heuristic:
\begin{quote}
Sort the candidates $s$ according to the product $w_s \cdot (P_s^+ - P_s^-)$, where 
$w_s$ denotes the product of the probabilities on the path of the spanning tree
of $\sG$ leading from $s_0$ to $s$. Choose the $n$ candidates with largest product, for 
a given $n$.
\end{quote}
We call this heuristic the {\em mass heuristic}. The {\em mixed heuristic}
delays widenings for nodes with depth less than a threshold $i$, and
for $n$ nodes of depth larger than or equal to $i$ with maximal product.
In the next section we illustrate depth-based unrolling, 
the mass heuristic, and the mixed heuristic on some examples.


\subsection{Experiments}
\label{sec:experiments}

We have implemented a prototype of our approach 
on top of the Parma Polyhedra Library~\cite{parma}, which provides 
several numerical domains~\cite{grids}. 
We present some experiments showing how simple domains like intervals 
can outperform predicate abstraction. Notice that examples exhibiting the opposite behaviour 
are also easy to find: our experiments 
are not an argument against predicate abstraction, but an argument for abstraction approaches 
not limited to it.

If the computed lower and upper bounds differ by more than $0.01$, 
we select refinement candidates using the different heuristics presented before and rebuild the abstraction. 
We used a Linux machine with 4GB RAM.

\noindent{\bf Two small programs.}
Consider the NPPs of Fig.~\ref{fig:ex-series-modulo}. We compute bounds 
with different domains: intervals, octagons, integer grids, and the 
product of integer grids and intervals~\cite{cousot79}.
For the refinement we use the mass (M) depth (D) and mixed (Mix) heuristics. For M and Mix we choose
15 refinement candidates at each iteration. The results are shown in Table~\ref{tab:mod}. For the left program the integer grid domain (and the product) compute precise bounds 
after one iteration. After 10 minutes, the PASS tool~\cite{wachter} only provides the bounds
$[0.5, 0.7]$ for the optimal reachability probability. 
For the right program only the product of grids and intervals is able 
to ``see'' that $x \equiv 0 \text{ (mod 3)}$ or $y < 30$ holds,
and yields precise bounds after 3 refinement steps. After 10 minutes PASS only provides the bounds 
$[0,0.75]$. The example illustrates how pure depth-based unrolling, ignoring probabilistic information, leads to poor results: the mass and mixed heuristics perform better. 
PASS may perform better after integrating appropriate
theories, but the example shows that combining domains is powerful and easily realizable 
by using Abstract Interpretation tools.
\begin{figure}[tbp]
\begin{minipage}[t]{0.45\linewidth}
\begin{footnotesize}
\begin{verbatim}
int a=0, ctr=0;
A1: (ctr=0) 
     -> 0.5:(a'=1)&(ctr'=1) 
       +0.5:(a'=0)&(ctr'=1);
A2: (ctr=1)&(a>=-400)&(a<= 400)  
     -> 0.5:(a'=a+5)  
       +0.5:(a'=a-5);
A3: (ctr=1) -> 1:(ctr'=2);
reach: (a=1)&(ctr=2)
\end{verbatim}
\end{footnotesize}
\end{minipage}
\qquad
\begin{minipage}[t]{0.45\linewidth}
\begin{footnotesize}
\begin{verbatim}
int x=0, y=0, c=0;
A1: (c=0)&(x<=1000)
     -> 0.25:(x'=3*x+2)&(y'=y-x)          
       +0.75:(x'=3*x)&(y'=30);
A2: (c=0)&(x>1000) -> 1:(c'=1);
A3: (c=1)&(x>=3) -> 1:(x'=x-3);
reach: (c=1)&(x=2)&(y>=30)
\end{verbatim}
\end{footnotesize}
\end{minipage}
\caption{Two guarded-command programs.}\label{fig:ex-series-modulo}
\end{figure}
\begin{table}[htbp]
\iftechrep{}{\vspace{-0.5cm}}
\centering
\begin{tabular}[c]{l|l|r|r|r|r|r|r|r|r|r|r|r|r}
Program                           &  Value      & \multicolumn{3}{c|}{Interval} &  \multicolumn{3}{c|}{Octagon} &  \multicolumn{3}{c|}{Grid} & \multicolumn{3}{|c}{Product}\\ \hline
                                  &             &  M  &  D   & Mix   &  M   & D & Mix & M                          & D &  Mix & M  & D & Mix \\ \hline
\multirow{4}{*}{Left}             & Iters:      & 23  & 81   & 24    & 28      &   81  & 28       &   1  &   1  &   1     & 1  & 1 & 1    \\
                                  & Time:       & 25  & 66.1 & 27.6  & 26.6    & 63.2 & 26.9      &  $0.39$ &  $0.39$ &  $0.39$  & $0.6$ & $0.6$& $0.6$  \\  
                                  & Size:       & 793 & 667  & 769   & 681    & 691 & 681      &  17 &  17 &  17  & 61 & 61 & 61  \\ \hline
\multirow{4}{*}{Right}            & Iters:     & -& -& - & - & -     & - & -      &   -  & -          &   3   &  7  & 3      \\ 
				  & Time:       & -& -& - & - & -     & - & -      &   -  & -          &  8.3  & 20.3 & 8.2     \\ 
				  & Size:       & -& -& - & - & -     & - & -      &   -  & -          &  495  & 756 & 495     \\ 
\end{tabular}
\iftechrep{}{\vspace{0.2cm}}
\caption{Experimental results for the programs in Fig.~\ref{fig:ex-series-modulo}. Iters is the number of iterations needed.
Time is given in seconds. '-' means the analysis did not return a precise enough bound after 10 minutes. 
Size denotes the maximal number of nodes
belonging to Player 1 that occured in one of the constructed games.
}\label{tab:mod}
\end{table}

\iftechrep{}{\vspace{-0.5cm}}
\noindent {\bf Programs of Fig.~\ref{fig:prog2}.}
For these 
PASS does not terminate after 10 minutes, while with the interval domain 
our approach computes the exact value after at most 5 iterations and less than 10 seconds.  
Most of the predicates added by PASS during the refinement for program 2 have the form $\texttt{c} \leq \alpha \cdot \texttt{i} + \beta$ with $\alpha > 0, \beta < 0$:
PASS's interpolation engines seem to take the wrong guesses during the generation of new predicates. This effect remains also if we change the refinement strategies of PASS. 
PASS offers the option of manually adding predicates. Interestingly it suffices to add a predicate as simple as e.g. $i > 3$ to
help the tool deriving the solution after 3 refinements for program 2.

\begin{table}[htbp]
\centering
\begin{tabular}[c]{l|r|r|r|r|r|r}
Zeroconf protocol                 & $K = 4$  & $K = 6$    &  $K=8$  &  $K = 4$  & $K = 6$   &  $K=8$ \\
(Interval domain)                 &  P1      &   P1       &  P1     &     P2    &   P2      &  P2    \\ \hline
Time (Mass heuristic):            &  6.2     &   16.8      &  32.2    &     5.8   & 18.5       &  50.6 \\
Time (Depth heuristic):           &  2.6    &   6.0      &  6.6    &     2.6   & 6.7       &  8.1  \\
Time (Mix):                       &  2.6    &   6.3      &  6.8    &     2.6   & 6.9       &  8.4  \\
Time PASS:                        & 0.6      &  0.8       &  1.1    & 0.7       & 0.9       & 1.2    \\
\end{tabular}
\vspace{0.2cm}
\caption{Experimental results for the Zeroconf protocol. Time in seconds.}\label{tab:zeroconf}
\end{table}

\iftechrep{}{\vspace{-0.6cm}}
\noindent{\bf Zeroconf.}
This is a simple probabilistic model of the Zeroconf protocol, adapted from \cite{qproversite,kattenbelt},
where it was analyzed using PRISM and predicate abstraction. It is parameterized by $K$, 
the maximal number of probes sent by the protocol. We check it for $K=4,6,8$ and two different 
properties. 
{\it Zeroconf} is a very good example for predicate abstraction, and so it
is not surprising that PASS beats the interval domain (see Table \ref{tab:zeroconf}). The example shows 
how the mass heuristic by itself may not provide good results either, with depth-unrolling and the mixed heuristics performing substantially better.

\section{Conclusions}
\label{sec:conclusion}

We have shown that the approach of \cite{kattenbelt,probtrans} for abstraction
of probabilistic systems can be extended to arbitrary domains,
allowing probabilistic checkers to profit from well developed 
libraries for abstract domains like intervals, octagons, and polyhedra
\cite{parma,apron}. 

For this we have extended the construction of abstract reachability
trees presented in \cite{gulavani} to the probabilistic case. The 
extension no longer yields a tree, but a stochastic 2-Player game
that overapproximates the MDP semantics of the program. The correctness
proof requires to use a novel technique.

The new approach allows to refine abstractions using standard
techniques like delaying widenings.
We have also presented a technique that 
selectively delays widenings using a heuristics based 
on quantitative properties
of the abstractions.

\subsubsection*{Acknowledgements.}
We thank Holger Hermanns, Ernst-Moritz Hahn, and Luis M.F. Fioriti for 
valuable comments, Bj{\"o}rn Wachter for many
discussions during the second author's stay at the University of Oxford,
made possible by Joel Ouaknine, and five anonymous reviewers for 
helpful remarks. The second author is supported by the DFG Graduiertenkolleg 1480 (PUMA).


%

\bibliographystyle{plain} 
\bibliography{literatur}

\iftechrep{\section*{Appendix}

The key of the proof of Theorem \ref{thm:main} is the following lemma:

\begin{lemma} \mbox{}
\label{th:strategy-construction}
Let $P = (\sV, \sigma_0, \sC)$ be a NPP and $\sG$ a valid abstraction of $P$ relative to $F$ 
for the abstract domain
$D^\sharp$. Then 
\begin{itemize}
\item[(1)] Given a strategy $\phi$ of the (single) player in $\sM_P$, there exists a strategy $\phi_1 \in S_1(\sG)$ such that
 \begin{equation*}
   \inf_{\psi \in S_2(\sG)} \reach{\sG[\phi_1, \psi], \{\varocircle, \varotimes\}} \leq \reach{\sM_P[\phi], F} \leq \sup_{\psi \in S_2(\sG)} \reach{\sG[\phi_1, \psi], \{\varocircle\}}.
  \end{equation*}
\item[(2)] Given a strategy $\phi_1 \in S_1(\sG)$ there exists a strategy $\phi \in S_1(\sM_P)$ such that
\begin{equation*}
     \reach{\sM_P[\phi], F}{} \leq \sup_{\psi \in S_2(\sG)}  \reach{\sG[\phi_1, \psi], \{\varotimes, \varocircle\}}{}.
\end{equation*}
\item[(3)] Given a strategy $\phi_1 \in S_1(\sG)$ there exists a strategy $\phi \in S_1(\sM_P)$ such that
\begin{equation*}
   \reach{\sM_P[\phi], F}{} \geq \inf_{\psi \in S_2(\sG)}  \reach{\sG[\phi_1, \psi], \{\varocircle\}}{}.
\end{equation*}
\end{itemize}
\end{lemma}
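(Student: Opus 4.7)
The plan is to establish all three assertions via an explicit coupling between runs of $\sM_P$ and runs of the abstraction $\sG$. Before constructing strategies, I would pass to a ``normalization'' $\sM$ of $\sM_P$ that unfolds the MDP into a tree, so that each concrete run is uniquely identified with its history and can be matched node-by-node against the spanning tree induced by the construction of $\sG$. The central device is a distribution function $D \colon \cyl{\sG} \times \cyl{\sM} \to [0,1]$ that disperses the probabilistic mass of each abstract run $R$ among the concrete runs $r$ whose configurations lie in the concretizations of the states visited by $R$; along a probabilistic edge $\tup{s,A,s',d} \to d(\tup{p,c})$ the mass of $R$ is transported to $r$ exactly when $\sem{c}$ maps the concrete configuration reached by $r$ into $\gamma(d(\tup{p,c}))$.

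For assertion~(1), given $\phi \in S_1(\sM_P)$, I would define $\phi_1$ at a Player-1 node $s$ by letting it propose the command that $\phi$ plays on the concrete configurations currently carrying mass at $s$ under $D$ (aggregating when $\phi$ is randomized). The two desired inequalities are then witnessed by two different Player-2 responses. For the lower bound I would take a ``pessimistic'' $\psi^{-}$ that, at each $\tup{s,A}$, routes mass toward $\varotimes$ whenever some $\sigma \in \gamma(s)$ fails to enable $A$ and otherwise chooses a probabilistic successor reflecting $D$; the resulting abstract mass reaching $\{\varocircle,\varotimes\}$ is then at most $\reach{\sM_P[\phi],F}$. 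For the upper bound I would take an ``optimistic'' $\psi^{+}$ that avoids $\varotimes$ altogether and moves to $\varocircle$ as soon as the matched concrete run under $\phi$ hits $F$, producing abstract mass in $\varocircle$ at least $\reach{\sM_P[\phi],F}$.

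Assertions~(2) and~(3) go in the reverse direction: given $\phi_1 \in S_1(\sG)$, I would construct $\phi \in S_1(\sM_P)$ by simulation. Condition~(2a) of Definition~\ref{def:validabstraction} guarantees that, for every concrete $\sigma$ enabling the command $A$ proposed by $\phi_1$ at the relevant ancestor $s$, there is at least one probabilistic successor $\tup{s,A,s_i,d_i}$ with $\sigma \in \gamma(s_i)$, so I can fix a selector $\pi$ from concrete configurations to such successors. The strategy $\phi$ is then defined to play the command $\phi_1$ proposes at the abstract ancestor picked out by $\pi$. For (2) I would witness the supremum by a Player-2 strategy that follows $\pi$ and rejects through $\varotimes$ when the concrete run falls out of an admissible branch; every concrete run reaching $F$ has a matched abstract run reaching $\varocircle$, which gives $\reach{\sM_P[\phi],F} \leq \reach{\sG[\phi_1,\psi],\{\varocircle,\varotimes\}}$. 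For (3), the symmetric choice, together with the fact that $\varotimes$ is excluded from the target set, forces every abstract $\varocircle$-run to correspond to a concrete $F$-run and yields the lower bound.

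The main obstacle is the interaction between widening and Player 2's choice of an overapproximating successor: a probabilistic successor $d(\tup{p,c})$ is only required to satisfy $\gamma(d(\tup{p,c})) \supseteq c(\gamma(s'))$, so mass can ``appear out of nowhere'' in the abstract, and a single concrete $\sigma$ may lie in several offered $\gamma(s_i)$; thus neither the coupling $D$ nor the selector $\pi$ is canonical. I expect the technical heart of the proof to be the verification that $D$ is a well-defined probability distribution respecting the transition structure on both sides, by induction on run length together with the identity $\sum_{\tup{p,c} \colon \sem{c}(\sigma) = \sigma'} p = \delta(\tup{\sigma,A})(\sigma')$. Once this invariant is maintained along every finite prefix, summing over $\cyl{\sG,F}$ (and using that $\sG$ has only finitely many nodes, so countable-sum convergence is automatic) yields the three inequalities with the excess mass cleanly routed through $\varotimes$ or outside $\varocircle$ in the appropriate direction.
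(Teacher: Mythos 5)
Your overall architecture --- normalizing $\sM_P$ with a sink state for $F$, coupling abstract and concrete runs through a mass-distribution function $D$, fixing a selector from concrete configurations to probabilistic successors (the paper calls it $\beta$), and verifying by induction on run length that $D$ reproduces the run values on both sides --- is essentially the proof the paper gives, and your setup for assertions (2) and (3) points in the right direction. However, there is a genuine error in your treatment of the left inequality of assertion (1). You propose a ``pessimistic'' Player-2 strategy $\psi^-$ that moves to $\varotimes$ whenever some $\sigma\in\gamma(s)$ fails to enable the proposed command, and claim the resulting mass reaching $\{\varocircle,\varotimes\}$ is at most $\reach{\sM_P[\phi],F}$. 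This is backwards: in the game whose target set is $\{\varocircle,\varotimes\}$, a move to $\varotimes$ ends the play \emph{inside} the target set, so every unit of mass you route there contributes fully to $\reach{\sG[\phi_1,\psi^-],\{\varocircle,\varotimes\}}$. If, for instance, $\phi$ never reaches $F$ (so the right-hand side is $0$) but the abstraction is coarse enough that $\tup{s,A}\rightarrow\varotimes$ exists along a run of positive mass, your $\psi^-$ yields a strictly positive left-hand side, and the claimed inequality fails.

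The fix is the opposite choice, and it is what the paper does: construct a \emph{single} Player-2 strategy $\phi_2$ that never plays $\varotimes$ and distributes mass among the probabilistic successors exactly according to $D$, so that $\reach{\sG[\phi_1,\phi_2],\{\varocircle\}}=\reach{\sM_P[\phi],F}$ holds with equality. Because no run under $\phi_2$ ever visits $\varotimes$, the same number also equals $\reach{\sG[\phi_1,\phi_2],\{\varocircle,\varotimes\}}$, and this one strategy witnesses both bounds of (1) simultaneously: the infimum over $\psi\in S_2(\sG)$ of the $\{\varocircle,\varotimes\}$-reachability is at most this value, and the supremum of the $\{\varocircle\}$-reachability is at least it. Your ``optimistic'' $\psi^+$ is essentially this $\phi_2$; you do not need a second, $\varotimes$-seeking strategy for the lower bound, and the one you propose does not do what you claim.
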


\begin{proof} 
In the proof we will use some additional notation to keep the presentation short.
For a guard $g$ and a configuration $\sigma$ over $\sV$ we write
$\sigma \models g$ iff $g$ enables $\sigma$.
We will also write $\cyl{\sM[\phi], \neg F}$ as an abbreviation for the set $\cyl{\sM[\phi]} \setminus \cyl{\sM[\phi], F}$.

Let $\sG = ((V_1, V_2, V_p), E, \delta, s_0)$, and $\sM_P = ((\overline{V}_1, \overline{V}_p), \overline{E}, \overline{\delta}, \sigma_0)$.
We set $V = V_1 \cup V_2 \cup V_p$ and $\overline{V} = \overline{V}_1 \cup \overline{V}_p$.
In the following, we denote by $R, \tilde{R}, \hat{R}, \ldots$ elements in $\cyl{\sG}$,
by $r, \tilde{r}, \hat{r}, \ldots$ elements in $\cyl{\sM_P}$, by $s, \tilde{s}, \hat{s}, \ldots$ elements in $V_1$ and
by $\sigma, \tilde{\sigma}, \hat{\sigma}, \ldots$ elements in $\overline{V}_1$.

We first note that runs $R \in \cyl{\gpp}$ that end in a node in $V_1$ always can be composed in $R = R_1 R_2$, with $R_1 \in  V_1 (V_2 V_p V_1)^\ast$
and $R_2$ the empty word or $R_2 \in V_2\{\varocircle,\varotimes\}$.
Analogously, every run $r \in \cyl{\sM_P}$ ending in  $\overline{V}_1$ is contained in $\overline{V}_1 (\overline{V}_p \overline{V}_1)^\ast$. 
We will exploit this structure in the proof.

\vspace{0.3cm}
\noindent {\bf Proof of (1).} We modify $\sM_P$ as follows: We add a new node $\sigma_f$ to $\overline{V}_1$ and
edges $(q, \sigma_f)$ to $\overline{E}$ for every $q \in F$, and remove all outgoing edges
of nodes in $F$. We denote this modified MDP by $\sM$.
Note that we have $\reach{\sM[\phi], \{ \sigma_f \}} =  \reach{\sM_P[\phi], F}$ for every strategy $\phi \in S_1(\sM)$, because the only choice of $\phi$ for a history ending in  $\sigma \in F$ is $\sigma_f$ 
(and every strategy $\phi \in S_1(\sM_P)$ corresponds uniquely to
a strategy $\phi' \in S_1(\sM)$, where $\phi'$ chooses $\sigma_f$ in every node $\sigma \in F$, which is the only option there for $\phi'$).

We show that there exist strategies $\phi_1 \in S_1(\sG)$ and $\phi_2 \in S_2(\sG)$ such that
\begin{equation*}
 \reach{\mpphi,  \{ \sigma_f \}}{} = \reach{\gpp, \{\varocircle\}}{},
\end{equation*}
which implies 
$\reach{\gpp, \{\varocircle\}}{} \geq \reach{\sM_P[\phi], \{ \sigma_f \}}$.
Since $\phi_2$ in our proof never chooses $\varotimes$ we also get
$\reach{\gpp, \{\varocircle, \varotimes\}}{} \leq \reach{\sM_P[\phi], \{ \sigma_f \}}$.

The crucial point of the proof is to distribute the probabilistic mass of a run $R \in \cyl{\sG}$
(ending in a node of the first player) among all the runs
$r \in \cyl{\sM}$ in a suitable way, depending on the strategy $\phi$. This distribution is then used 
to define the strategies $\phi_1$ and $\phi_2$. We formalize the distribution as a function
\begin{equation*}
 D: (\cyl{\sG}\cap V^\ast V_1) \times (\cyl{\sM}\cap \overline{V}^\ast \overline{V}_1) \rightarrow [0,1].
\end{equation*}
\andreas{Andreas: I've replaced $\gpp$ by $\sG$!!}
\noindent where, loosely speaking, $D(R,r)$ is the fraction of the probabilistic mass of $R$ that is 
assigned to $r$. In order to define $D$ we need an auxiliary function $\beta$. Due to the requirements of a valid abstraction, we know that for every $\tup{s,A} \in V_2$ and every
concrete configuration $\sigma \in \gamma(g_A(s))$ there exists at least one node $\tup{s,A,s',d} \in V_p$ with $\tup{s,A}\rightarrow\tup{s,A,s',d}$ and $\sigma \in \gamma(s')$. But there might
be more than one node in $V_p$ satisfying these conditions. We fix for every $\tup{s,A}$ and every $\sigma \in \gamma(g_a(s))$  an arbitrary successor node $\tup{s,A,s',d}$ with $\sigma \in \gamma(s')$ and set
$\beta(\sigma, s, A) = \tup{s,A,s',d}$.

Now we finally proceed to define $D$ inductively:
\begin{itemize}
\item[(i)] If $R = s_0$ and $r = \sigma_0$ then $D(R,r) := 1$.
\item[(ii)]  If $R = \tilde{R}\tup{s,A}\tup{s,A,s',d}\tilde{s}$, $r = \tilde{r}\tup{\tilde{\sigma}, A}\sigma$, 
$A \in \sC$ and $\beta(\tilde{\sigma}, s, A) = \tup{s,A,s',d}$ then 
\begin{equation*}
 D(R,r) := \sum_{\substack{\tup{p,c} \in up_A: \\ 
 d(\tup{p,c}) = \tilde{s} \wedge \sem{c}(\tilde{\sigma}) = \sigma}}
 {\phi(\tilde{r})(\tup{\tilde{\sigma}, A}) \cdot D(\tilde{R}, \tilde{r}) \cdot p}.
 \end{equation*} 
\item[(iii)] If $R = \tilde{R}\tilde{s}\tup{\tilde{s},\varocircle}\varocircle$ and $r = \tilde{r}\sigma\sigma_f $ then
$D(R,r) := D(\tilde{R}\tilde{s}, \tilde{r}\sigma)$.
\item[(iv)] Otherwise $D(R,r) := 0$.
\end{itemize}
\noindent Note that, due to the properties of a valid abstraction and our definition of $D$, if $D(Rs, r\sigma) > 0$, then $\sigma \in \gamma(s)$ for every $\sigma \in \Sigma_\sV$.
Also note that for every $\tilde{r} \in \cyl{\sM}$ 
We now proceed to define the strategies $\phi_1, \phi_2$. We use the abbreviations 
$$\begin{array}{rcl}
D_R  :=  \sum_{r \in \cyl{\mpphi}}{D(R, r)} & \quad \mbox{ and } \quad &
D_r  :=  \sum_{R \in \cyl{\gpp}}{D(R,r)}
\end{array}$$
Given a run $R \in \cyl{\sG}\cap V^\ast V_1$, if $D_R = 0$ we define $\phi_1(R)$ arbitrarily, i.e., we let $\phi_1(R)$
be an arbitrary successor of the last node of $R$. If $D_R \neq 0$, we define $\phi_1(R)$ as follows:

\begin{align*}
  \phi_1(R)(\tup{s,A}) &:= \frac{1}{D_R} \cdot \sum_{r \in \cyl{\sM[\phi], \neg F}}{D(R, r) \cdot \phi(r)(\tup{\sigma,A})} \\
  \phi_1(R)(\tup{s,\varocircle}) &:= \frac{1}{D_R} \cdot \sum_{r \in \cyl{\mpphi, F}}{D(R, r)}
\end{align*}
\noindent Given a run $R = \tilde{R}\tup{s,A} \in \cyl{\sG}\cap \overline{V}^\ast \overline{V}_2$ and $A \in \sC$, if $D_{\tilde{R}} = 0$ we define $\phi_2(R)$ arbitrarily. If $D_{\tilde{R}} \neq 0$, 
we define $\phi_2(R)$ as follows:
\begin{align*}
  \phi_2(\tilde{R}\tup{s,A})(\tup{s,A,s',d}) &:= 
    \frac{ \displaystyle{\sum_{\substack{r  \in \cyl{\sM[\phi], \neg F}: \\ \beta(\sigma, s, A) = \tup{s,A,s',d} \\ }}{D(\tilde{R}, r) \cdot \phi(r)(\tup{\sigma,A})}} }{D_{\tilde{R}} \cdot \phi_1(\tilde{R})(\tup{s,A})} \\
  \phi_2(\tilde{R}\tup{s,\varocircle})(\varocircle) := 1
\end{align*}

\noindent It is easy to see that the functions $\phi_1, \phi_2$ so defined are
indeed strategies: Recall that a NPP cannot reach a configuration $\sigma \not \in F$ where no guarded command is enabled. Hence $\phi(r)$ is always defined when used in the definition.


We list several properties of the function $D$:
\begin{itemize}
\item[(a)] For every $r \in \cyl{\mpphi, \{\sigma_f\}}$: if $D(R,r) > 0$, then $R \in \cyl{\gpp, \{\varocircle\}}$.\\
(Follows easily from the definitions.)
\item[(b)] For every $R \in \cyl{\gpp, \{\varocircle\}}$: if $D(R,r) > 0$, then $r \in \cyl{\mpphi, \{\sigma_f\}}$ (Follows easily from the definitions.)
\item[(c)] For every $R \in \cyl{\gpp}$ ending in $V_1 \cup \{\varocircle, \varotimes\}$: $D_R = \val{R}{}$.\\
(Proof delayed, see below. Note that due to our
choice of $\phi_1, \phi_2$, $\varotimes$ can never be reached.)
\item[(d)] For every $r \in \cyl{\gpp}$ ending in $\overline{V}_1 \cup \{\sigma_f\}$: $D_r = \val{r}{}$.\\
(Proof delayed, see below.)
\end{itemize}

Using (a)-(d) we can now conclude the proof:
\begin{align*}
\reach{\gpp, \{\varocircle\}} &= \sum_{R \in \cyl{\gpp, \{\varocircle\}}}{\val{R}{}} & \text{(Def. of $\reach{}$)}\\
&= \sum_{R \in \cyl{\gpp, \{\varocircle\}}}{\sum_{r \in \cyl{\mpphi}}{D(R, r)}} & \text{(Prop. (c))} \\
&= \sum_{r \in \cyl{\mpphi, \{\sigma_f\}}} \sum_{R \in \cyl{\gpp}}{D(R,r)} & \text{(Props. (a)-(b))}\\
&= \sum_{r \in \cyl{\mpphi, \{\sigma_f\}}} \val{r}{} = \reach{\mpphi} & \text{(Prop. (d))}.
\end{align*}

\paragraph{Proof of property (c).} By induction on the length of $R$.
\begin{itemize}
 \item If $R = s_0$ then $\sum_{\substack{r \in \cyl{\mpphi}}}{D(s_0, r)} = D(s_0, \sigma_0) = 1 = \val{R}{}_{\gpp}$.
 \item If $R = \tilde{R} \tup{s,A}\tup{s,A,s',d}\tilde{s} \in \cyl{\gpp}$ then
$$
\begin{array}{rclr}
& &\val{R}{} \\[0.2cm]
&=&\val{\tilde{R}}{} \cdot  \phi_1(\tilde{R})(\tup{s,A}) \cdot \phi_2(\tilde{R} \tup{s,A}) \cdot  \delta(\tup{s,A,s',d})(\tilde{s}) & \text{(Def. $\val{\cdot}{}$)} \\[0.2cm]
&=& D_{\tilde{R}} \cdot \phi_1(\tilde{R})(\tup{s,A}) \cdot \phi_2(\tilde{R} \tup{s,A})(\tup{s,A,s',d}) \cdot  \delta(\tup{s,A,s',d})(\tilde{s}) & \text{(Ind. Hyp.)}\\[0.2cm]
&=& \displaystyle
\Bigg(  \sum_{\substack{\tilde{r}\tup{\sigma,A} \in \cyl{\sM[\phi], \neg F}: \\ \beta(\sigma, s, A) = \tup{s,A,s',d} \\ }}
{D(\tilde{R}, \tilde{r})} \cdot \phi(\tilde{r})(\tup{\sigma,A})\Bigg) \cdot \delta(\tup{s,A,s',d})(\tilde{s}) & \text{(Def. $\phi_1, \phi_2$)}
\\[0.2cm]
&=& \displaystyle
\sum_{\substack{\tup{p,c} \in up_A: \\ d(\tup{p,c}) = \tilde{s}}}
\Bigg(  \sum_{\substack{\tilde{r}\tup{\sigma,A}\in \cyl{\sM[\phi], \neg F}: \\ \beta(\sigma, s, A) = \tup{s,A,s',d} \\
 }}{p \cdot D(\tilde{R}, \tilde{r}) \cdot \phi(\tilde{r})(\tup{\sigma,A})}\Bigg)& \text{(Def. $\delta$)} 
\\[0.2cm]
&=& \displaystyle
\sum_{\substack{\tup{p,c} \in up_A: \\ d(\tup{p,c}) = \tilde{s}}}
\Bigg(  \sum_{\substack{\tilde{r}\tup{\sigma,A}\tilde{\sigma} \in \cyl{\mpphi}: \\ \beta(\sigma, s, A) = \tup{s,A,s',d} \\ \wedge \sem{c}(\sigma) = \tilde{\sigma}}}{p \cdot D(\tilde{R}, \tilde{r}) \cdot \phi(\tilde{r})(\tup{\sigma,A})}\Bigg) & \text{(**)}
\\[0.2cm]
&=& \displaystyle
\sum_{\substack{\tilde{r}\tup{\sigma,A}\tilde{\sigma}  \in \cyl{\sM[\phi], \neg F}}}
 \sum_{\substack{\tup{p,c} \in up_A: \\ d(\tup{p,c}) = \tilde{s} \wedge \sem{c}(\sigma) = \tilde{\sigma} \\  \wedge \beta(\sigma, s, A) = \tup{s,A,s',d} \\
  }}
{p \cdot D(\tilde{R}, \tilde{r}) \cdot \phi(\tilde{r})(\tup{\sigma,A})} 
\\[0.2cm]
&=&
D_R. & \text{ (Def. $D$)}.
\end{array}$$
\noindent (**) For every run $\tilde{r}\tup{\sigma,A}$ and every $\tup{p,c} \in \sC$ there exists
exactly one extension $\tilde{r}\tup{\sigma,A}\sem{c}(\sigma)$, and so the value of the sum does not change by modifying the sum quantifier in this way.
\item If $R = \tilde{R}\tilde{s}\tup{\tilde{s}, \varocircle}\varocircle \in \cyl{\gpp}$ then
 \begin{align*}
   \val{R}{} &= \val{\tilde{R}\tilde{s}}{} \cdot\phi_1(\tilde{R}\tilde{s})(\tup{\tilde{s}, \varocircle}) \cdot \phi_2(\tilde{R}\tilde{s}\tup{\tilde{s}, \varocircle})(\varocircle) \\
   &= D_{\tilde{R}\tilde{s}} \cdot\phi_1(\tilde{R}\tilde{s})(\tup{\tilde{s}, \varocircle}) \cdot \phi_2(\tilde{R}\tilde{s}\tup{\tilde{s}, \varocircle})(\varocircle) & \text{ (Ind. hyp.)} \\
   &= D_{\tilde{R}\tilde{s}} \cdot\phi_1(\tilde{R}\tilde{s})(\tup{\tilde{s}, \varocircle}) & \text{($\phi_2(\tilde{R}\tilde{s}\tup{\tilde{s}, \varocircle})(\varocircle)=1$ by def. of  $\phi_2$)} \\
   &= \sum_{r \in \cyl{\mpphi, F}}{D(\tilde{R}\tilde{s}, r)}  & \text{ (Def. of $D_R$)}\\ 
   &= \sum_{r \in \cyl{\mpphi, F}}{D(\tilde{R}\tilde{s}\tup{\tilde{s}, \varocircle}\varocircle, r\sigma_f)}  & \text{ (Def. of $D$, part (iii))}\\ 
   &= D_R. &  \text{ (Def. of $D$, Props. (a)-(b))} 
 \end{align*}
\end{itemize}

\noindent {\it Proof of property (d).} By induction on the length of $r$.
\begin{itemize}
\item If $r = \sigma_0$ then we proceed as in the case of $R$.
\item If  $r = \tilde{r} \tup{\tilde{\sigma}, A} \sigma \in \cyl{\mpphi}$ then
$$\begin{array}{rclr}
  & & \val{r}{} \\[0.2cm]

  &=& \displaystyle \val{\tilde{r}}{}_{} \cdot \phi(\tilde{r})(\tup{\tilde{\sigma}, A}) \cdot 
\sum_{\tup{p,c} \in up_A: \sem{c}(\tilde{\sigma}) = \sigma}{p} \\ [0.5cm]

  &=&  \displaystyle \Bigg(  \sum_{\tilde{R}s \in \cyl{\gpp}}{D(\tilde{R}s, \tilde{r})} \Bigg) \cdot \phi(\tilde{r})(\tup{\tilde{\sigma}, A}) 
  \cdot \sum_{\tup{p,c} \in up_A: \sem{c}(\tilde{\sigma}) = \sigma}{p} & \text{ (Ind. Hyp.) } \\[0.5cm]

  &=&\displaystyle \sum_{\tilde{R}s \in \cyl{\gpp}} \sum_{\substack{\tup{p,c} \in up_A: \\ \sem{c}(\tilde{\sigma}) = \sigma}}
  {D(\tilde{R}s, \tilde{r})} \cdot \phi(\tilde{r})(\tup{\tilde{\sigma}, A}) \cdot p & \text{ } \\[0.5cm]

  &=&\displaystyle 
\sum_{\substack{\tilde{R}s\tup{s,A}\tup{s,A,s',d}\\ \in \cyl{\gpp}: \\  \beta(\tilde{\sigma}, s,A) = \tup{s,A,s',d}}}
\sum_{\substack{\tup{p,c} \in up_A: \\ \sem{c}(\tilde{\sigma}) = \sigma }}
  D(\tilde{R}s, \tilde{r}) \cdot \phi(\tilde{r})(\tup{\tilde{\sigma}, A}) \cdot p & \text{(**)} \\[0.2cm]

  &=&\displaystyle 
\sum_{\substack{\tilde{R}s\tup{s,A}\tup{s,A,s',d}\tilde{s}\\ \in \cyl{\gpp}: \\ \beta(\tilde{\sigma}, s,A) = \tup{s,A,s',d}}}
\sum_{\substack{\tup{p,c} \in up_A: \\ \sem{c}(\tilde{\sigma}) = \sigma \\ \wedge d(\tup{p,c}) = \tilde{s}}}
  D(\tilde{R}s, \tilde{r}) \cdot \phi(\tilde{r})(\tup{\tilde{\sigma}, A}) \cdot p & \text{(***)} \\[0.2cm]

  &=&\displaystyle 
\sum_{\substack{\tilde{R}s\tup{s,A}\tup{s,A,s',d}\tilde{s} \\ \in \cyl{\gpp}}}
\sum_{\substack{\tup{p,c} \in up_A: \\ d(\tup{p,c}) = \tilde{s} \wedge \sem{c}(\tilde{\sigma}) = \sigma \\ \wedge \beta(\tilde{\sigma}, s,A) = \tup{s,A,s',d}}}
  D(\tilde{R}s, \tilde{r}) \cdot \phi(\tilde{r})(\tup{\tilde{\sigma}, A}) \cdot p &  \\[0.2cm]

  &=& \displaystyle D_r.

\end{array}$$
\noindent (**)
If $D(\tilde{R}s, \tilde{r}) > 0$, then $\tilde{\sigma} \in \gamma(s)$ holds. 
Using the definition of a valid abstraction we conclude that $s \rightarrow \tup{s,A}$ if $\tilde{\sigma} \rightarrow \tup{\tilde{\sigma}, A}$ 
(especially if $\phi(\tilde{r})(\tup{\tilde{\sigma}, A}) > 0$).
Also there exists a unique node $\tup{s, A, s', d}$ with 
$\tup{s, A} \rightarrow \tup{s,A,s',d}$ and $\tup{s, A, s', d} = \beta(\tilde{\sigma}, A, \hat{s})$.
Therefore we can replace the sum quantifier without changing the summands here.

\noindent (***) For every $t = \tup{p,c} \in up_a$, there exists a unique $\tilde{s} \in V_1$ such that
$d(t) = \tilde{s}$ and $\tup{s,A,s',d} \rightarrow \tilde{s}$. 
So, for a fixed $\tilde{R}s\tup{s,A}\tup{s,A,s',d}$ with
$\beta(\tilde{\sigma}, s,A) = \tup{s,A,s',d}$, 
\begin{equation*}
 \sum_{\substack{\tup{p,c} \in up_A: \\ \sem{c}(\tilde{\sigma}) = \sigma }}
  D(\tilde{R}s, \tilde{r}) \cdot \phi(\tilde{r})(\tup{\tilde{\sigma}, A}) \cdot p 
= 
 \sum_{\substack{\tilde{s} \in V_1: \\ \tilde{R}s\tup{s,A}\tup{s,A,s',d}\tilde{s} \\ \in \cyl{\gpp} }}
 \sum_{\substack{\tup{p,c} \in up_A: \\ \sem{c}(\tilde{\sigma}) = \sigma \\ \wedge d(\tup{p,c}) = \tilde{s}}}
  D(\tilde{R}s, \tilde{r}) \cdot \phi(\tilde{r})(\tup{\tilde{\sigma}, A}) \cdot p 
\end{equation*}

\item If $r = \tilde{r}\tilde{\sigma}\sigma_f$ then
\begin{align*}
\val{r}{} &= \val{\tilde{r}\tilde{\sigma}}{} \cdot \phi(\tilde{r}\tilde{\sigma})(\sigma_f)  \\[0.2cm]
&= \val{\tilde{r}\tilde{\sigma}}{} & \text{(**)}  \\[0.2cm]
&= D_{\tilde{r}\tilde{\sigma}} & \text{ (Ind. hyp.) }  \\
&= D_r & \text{ (Def. of $D$). } &
\end{align*}
(**) By definition of $\sM$, the only successor of $\sigma$ is $\sigma_f$.
\end{itemize}

\vspace{0.3cm}
\noindent {\bf Proof of (2).} Recall that at the beginning of the proof of (1) we have extended
$\sM_P$ by adding a new node $\sigma_f$ to 
$\overline{V}_1$ and by adding edges $(q, \sigma_f)$ to $\overline{E}$ for every $q \in F$.
We now extend $\sM_P$ further by also adding edges $(q, \sigma_f)$ to $\overline{E}$ for every $q \in \overline{V}_1 \setminus F$.
Now \emph{every} node in $V_1 \setminus \{\sigma_f\}$ has $\sigma_f$ as a successor. We still call this modified MDP $\sM$.

Let $\phi_1 \in S_1(\sG)$ be a strategy for Player 1. We show that there exist strategies 
$\phi_2 \in S_2(\sG)$ and $\phi \in S_1(\sM)$ such that
\begin{equation*}
\label{eq:m-eq}
 \reach{\sM[\phi], \{\sigma_f\}}{} = \reach{\gpp, \{\varocircle, \varotimes\}}{}.
\end{equation*}

To show that this equality proves (2), observe that for every $\phi \in S_1(\sM)$ there exists a strategy $\phi' \in S_1(\sM_P)$ such that
$\reach{\sM_P[\phi'], F} \leq \reach{\sM[\phi], \{\sigma_f\}}$ (simply distribute the probability assigned to
$\sigma_f$ to other successors arbitrarily if necessary). 
So for every $\phi \in S_1(\sM)$ that satisfies the equality we have
\begin{equation*}
\reach{\sM[\phi'], F} \leq \reach{\sG[\phi_1, \phi_2], \{\varocircle, \varotimes\}}
\leq  \sup_{\psi \in S_2(\sG)}  \reach{\sG[\phi_1, \psi], \{\varotimes, \varocircle\}}{}.
\end{equation*}

We now observe that it suffices to prove the results for strategies $\phi_1$ 
that are memoryless and non-probabilistic: this follows easily from Theorem~\ref{th:strategy-construction},
which shows that the infimum over all $\phi_1$ of $\sup_{\psi \in S_2(\sG)} \reach{\sG[\phi_1, \psi], \{\varotimes, \varocircle\}}{}$ is achieved for such a strategy $\phi_1$.
Hence as an abbreviation we can write $\phi_1(s) = A$ for $\phi_1(Rs)(\tup{s,A}) = 1$ for all 
$s \in V_1$, $Rs \in \cyl{\gpp}$ and $a \in \sC \cup \{\varotimes, \varocircle\}$.

For the proof we again define a suitable distribution function $D$, this time as follows:
\begin{itemize}
 \item  If $R = \sigma_0$ and $r = \sigma_0$ then $D(R,r) := 1$.
 \item If $R = \tilde{R}\tup{\tilde{s},A}\tup{\tilde{s},A,s',d}s$ and
  $r = \tilde{r}\tup{\tilde{\sigma}, A}\sigma$, with
  $\tup{\tilde{s},A} \not\rightarrow \varotimes$ and $\tup{\tilde{s},A} \not\rightarrow \varocircle$ and
  $\beta(\sigma, \tilde{s},A) = \tup{\tilde{s},A,s',d}$ then
  \begin{equation*}
    D(R,r) := D(\tilde{R}, \tilde{r}) \cdot \sum_{\substack{\tup{p,c}\in up_A:  \\ d(\tup{p,c}) = s  \wedge{\sem{c}(\tilde{\sigma}}) = \sigma}}{p}.
  \end{equation*}
 \item If $R = \tilde{R}\tup{\tilde{s},A}\varocircle$ 
 with  $\tup{\tilde{s}, A} \rightarrow \varocircle$  (with $A\in \sC \cup \{\varocircle\}$)
 and $r = \tilde{r} \sigma \sigma_f$ then
  \begin{equation*}
    D(R,r) := D(\tilde{R}, \tilde{r}\sigma).
  \end{equation*}
\item If $R = \tilde{R}\tup{\tilde{s},A}\varotimes$
 with  $\tup{\tilde{s}, A} \rightarrow \varotimes$ and $\tup{\tilde{s}, A} \not\rightarrow \varocircle$ (with $A\in \sC \cup \{\varocircle\}$)
 and $r = \tilde{r} \sigma \sigma_f$ then
  \begin{equation*}
    D(R,r) := D(\tilde{R}, \tilde{r}\sigma).
  \end{equation*}
 \item Otherwise $D(R,r) := 0$.
\end{itemize}

We now proceed to define $\phi$. Let ${\it final}(\tup{s,A})$ denote that $\tup{s,A} \rightarrow \varocircle$ or $\tup{s,A} \rightarrow \varotimes$. Let $r\sigma \in \cyl{\sM}$. 
\begin{itemize}
\item If $D_{r\sigma} = 0$ then we define $\phi(r)$ arbitrarily. 
\item If $D_{r\sigma} > 0$ then for every $A \in \sC$
\begin{equation*}
 \phi(r\sigma)(\tup{\sigma, A}) := \frac{1}{D_{r\sigma}} \sum_{\substack{R s\in \cyl{\gpp}: \\ \phi_1(s) = A  \wedge \neg {\it final}(\tup{s,A})}}{D(R s, r\sigma)}
\end{equation*}
and
\begin{equation*}
 \phi(r\sigma)(\sigma_f) := \frac{1}{D_{r\sigma}} \sum_{\substack{Rs \in \cyl{\gpp}: \\ \phi_1(s) = A  \wedge  {\it final}(\tup{s,A})}}{D(R s, r\sigma)}
\end{equation*}

Note here that this is a valid strategy: If $\sigma \not\rightarrow \tup{\sigma, A}$, then we know 
(due to the definition of a valid abstraction
and again the fact that $\sigma \in \gamma(s)$ if $D(\tilde{R}s, \tilde{r}\sigma)>0$) that $\tup{s,a} \rightarrow \varotimes$, 
hence $\phi(r\sigma)(\tup{\sigma, A}) = 0$. 
Also it is easy to see that $\phi(r\sigma)(\sigma_f) +  \sum_{A \in \sC}{\phi(r\sigma)(\tup{\sigma,A})} = 1$.
\end{itemize}

Finally, we define $\phi_2 \in S_2(\sG)$. As in part (1), $\phi_2$ directs the probabilistic mass to the matching nodes in $V_p$ relative
to $\beta$. But in this case we choose $\varocircle$ resp. $\varotimes$ as often as possible:
For $R = \tilde{R}s\tup{s,a} \in \cyl{\gpp}$ (i.e., $\phi_1(s) = A$),
\begin{itemize}
 \item If $\tup{s,A} \rightarrow \varocircle$ then $\phi_2(R)(\varocircle)=1$.
 \item If $\tup{s,A} \not\rightarrow \varocircle$ and $\tup{s,A} \rightarrow \varotimes$ then $\phi_2(R)(\varotimes)=1$.
 \item If $\tup{s,A} \not\rightarrow \varocircle$, $\tup{s,A} \not\rightarrow \varotimes$, and $D_R = 0$ then we define $\phi_2(R)$ arbitrarily.
 \item If $\tup{s,A} \not\rightarrow \varocircle$, $\tup{s,A} \not\rightarrow \varotimes$, and $D_R > 0$ then for every $\tup{s,A,s',d}$ with $\tup{s,A} \rightarrow \tup{s,A,s',d}$
 \begin{equation*}
  \phi_2(R)(\tup{s,A,s',d}) = \frac{1}{D_R} \cdot \sum_{\substack{r\sigma \in \cyl{\sM}: \\ \beta(\sigma,s,A) = \tup{s,A,s',d}}} {D(\tilde{R}s, r\sigma)}.
 \end{equation*}
\end{itemize}

Now, as in the proof of (1), we show that with this choice for $D$, $\phi$, and $\phi_2$ the
properties (a)-(d) hold. The rest of the proof is then exactly as in the proof of (1). Proving properties (a) and (b) is again easy. 

\vspace{0.2cm}
\noindent {\it Proof of property (c).} By induction on the length of $R$.
\begin{itemize}
 \item If $R = s_0$ then $\val{R}{} = 1 = D(R, \sigma_0) = D_R$.
 \item For $R = \tilde{R}\tilde{s}\tup{\tilde{s}, A}\tup{\tilde{s}, A, s', d}s$ then the proof proceeds as for (1). 
 \item For $R = \tilde{R}\tilde{s}\tup{\tilde{s}, A}\varocircle$, $\phi_1(\tilde{s}) = A$ and $\tup{\tilde{s}, A} \rightarrow \varocircle$:
 \begin{align*}
   \val{R}{} &= \val{\tilde{R}\tilde{s}}{} \cdot \phi_2(\tilde{R}\tilde{s}\tup{\tilde{s}, A})(\varocircle) 
  & \text{}\\
   &= \val{\tilde{R}}{} & \text{$\phi_2(\tilde{R}\tilde{s}\tup{\tilde{s}, A})(\varocircle)=1$ by Def. of $\phi_2$}\\
   &= D_{\tilde{R}} & \text{ (Ind. hyp.) } \\
   &= D_R. &  \text{ (Def. of $D_R$ and Def. of $D$, part (iii))}
 \end{align*}
  If  $\tup{\tilde{s}, A} \not\rightarrow \varocircle$, then $\val{R}{} = \val{\tilde{R}\tilde{s}}{} \cdot \phi_2(\tilde{R}\tilde{s}\tup{\tilde{s}, A})(\varocircle) = 0 = D_R$. 
 \item For $R = \tilde{R}\tilde{s}\tup{\tilde{s}, A}\varotimes$, $\phi_1(\tilde{s}) = A$, $\tup{\tilde{s}, A} \not\rightarrow \varocircle$
and $\tup{\tilde{s}, A} \rightarrow \varotimes$:
 \begin{align*}
   \val{R}{} &= \val{\tilde{R}\tilde{s}}{} \cdot \phi_2(\tilde{R}\tilde{s}\tup{\tilde{s}, A})(\varotimes) 
  & \text{}\\
   &= \val{\tilde{R}\tilde{s}}{} & \text{$\phi_2(\tilde{R}\tilde{s}\tup{\tilde{s}, A})(\varotimes)=1$ by Def. of $\phi_2$}\\
   &= D_{\tilde{R}\tilde{s}} & \text{ (Ind. hyp.) } \\
   &= D_R. &  \text{ (Def. of $D_R$ and Def. of $D$, part (iv))  }
 \end{align*}
  If  $\tup{\tilde{s}, A} \rightarrow \varocircle$ or  $\tup{\tilde{s}, A} \not\rightarrow \varotimes$ holds,
 then $\val{R}{} = \val{\tilde{R}\tilde{s}}{} \cdot \phi_2(\tilde{R}\tilde{s}\tup{\tilde{s}, A})(\varotimes) = 0 = D_R$. 
\end{itemize}

\noindent {\it Proof of property (d).} By induction on the length of $r$.

\begin{itemize}
 \item If $r = \sigma_0$ then $\val{r}{} = 1 = D(s_0, r) = D_r$.
 \item If $r = \tilde{r}\tilde{\sigma}\tup{\sigma,A}\sigma$ then
$$\begin{array}{rclr}
 & & \val{r}{} \\[0.2cm]
 &=& \displaystyle \val{\tilde{r}\tilde{\sigma}}{} \cdot \phi(\tilde{r}\tilde{\sigma})(\tup{\tilde{\sigma}, A}) \cdot 
 \sum_{\substack{\tup{p,c}\in up_a: \\{\sem{c}(\tilde{\sigma}) = \sigma}}}{p} 
& \text{(Def. $\val{\cdot}{}$)}\\[0.9cm]
 &=& \displaystyle D_{\tilde{r}\tilde{\sigma}} \cdot \frac{1}{D_{\tilde{r}\tilde{\sigma}}} \cdot \sum_{\substack{\tilde{R}s \in \cyl{\gpp}: \\ \phi_1(s) = A \wedge
  \neg final(\tup{s,A})}}{D(\tilde{R}s, \tilde{r}\tilde{\sigma})}
 \cdot 
 \sum_{\substack{\tup{p,c}\in up_A: \\ {\sem{c}(\tilde{\sigma}) = \sigma}}}{p}
 & \text{ (Ind. hyp., Def. $\phi$) }  \\[0.9cm]
 &=& \displaystyle \sum_{\substack{\tilde{R}s \in \cyl{\gpp}: \\ \phi_1(s) = A \wedge \neg final(\tup{s,A})}}{D(\tilde{R}s, \tilde{r}\tilde{\sigma})}
 \cdot 
 \sum_{\substack{\tup{p,c}\in up_a: \\ {\sem{c}(\tilde{\sigma}) = \sigma}}}{p}
 & \text{ } \\[0.9cm]
 &=& \displaystyle \sum_{\substack{\tilde{R}s\tup{s,A,s',d} \in \cyl{\gpp}: \\ \phi_1(s) = A  \wedge \neg final(\tup{s,A}) \\ \wedge \beta(\tilde{\sigma},s,A) = \tup{s,A,s',d}}}{D(\tilde{R}s, \tilde{r}\tilde{\sigma})} 
 \cdot 
 \sum_{\substack{\tup{p,c}\in up_a: \\ {\sem{c}(\tilde{\sigma}) = \sigma} }}{p}
 &\text{(**)} \\[0.9cm]
 &=& \displaystyle \sum_{\substack{\tilde{R}s\tup{s,A,s',d} \in \cyl{\gpp}: \\ \phi_1(s) = A  \wedge \neg final(\tup{s,A}) \\ \wedge \beta(\tilde{\sigma},s,A) = \tup{s,A,s',d}}}
  \sum_{\substack{\tup{p,c}\in up_a: \\ {\sem{c}(\tilde{\sigma}) = \sigma} }} {p \cdot D(\tilde{R}s, \tilde{r}\tilde{\sigma})} 
 &\text{} \\[0.9cm]
 &=& \displaystyle \sum_{\substack{\tilde{R}s\tup{s,A,s',d}\tilde{s} \in \cyl{\gpp}: \\ \phi_1(s) = A  \wedge \neg final(\tup{s,A}) \\ \wedge \beta(\tilde{\sigma},s,A) = \tup{s,A,s',d}}}
  \sum_{\substack{\tup{p,c}\in up_A: \\ {\sem{c}(\tilde{\sigma}) = \sigma} \\ \wedge d(\tup{p,c}) = \tilde{s} }} {p \cdot D(\tilde{R}s, \tilde{r}\tilde{\sigma})} 
 &\text{(see explanation in part (1))} \\[0.9cm]
  &=& D_r & \text{(Def. of $D$).}
\end{array}$$
\noindent(**) As in part (1), we use that there exists a unique $\tup{s,A,s',d}$ with $\beta(\tilde{\sigma}, s,A) = \tup{s,A,s',d}$.

\item If $r = \tilde{r}\tilde{\sigma}\sigma_f$ then 
$$\begin{array}{rclr}
& & \val{r}{} \\[0.2cm]
&=& \val{\tilde{r}\tilde{\sigma}}{} \cdot \phi(\tilde{r}\tilde{\sigma})(\sigma_f)  
& \text{(Def. $\val{\cdot}{}$)} \\
&=& \displaystyle D_{\tilde{r}\tilde{\sigma}} \cdot \frac{1}{D_{\tilde{r}\tilde{\sigma}}} \cdot \sum_{\substack{R \in \cyl{\gpp}: \\ \phi_1(s) = A \wedge  final(\tup{s,a})}}{D(R, r)}
& \text{(Ind. hyp., Def. $\phi$)}  \\[0.9cm]
&=& \displaystyle \sum_{\substack{R \in \cyl{\gpp}: \\ \phi_1(s) = A \wedge  final(\tup{s,a})}}{D(R, r)} \\[0.9cm]
&=& D_r & \text{(Def. of $D_r$ and $D$)}
\end{array}$$
\end{itemize}
\noindent {\bf Proof of (3).}
This proof is similar to the one of part (2), but now Player 2 always chooses $\varotimes$ if she can.
Therefore 
we add an additional node $\sigma_{\varotimes}$ to $\sM$ which is no goal state, and connect every node in $\overline{V}_1$ with $\sigma_{\varotimes}$. 
These edges now simulate the choice of $\varotimes$ as successor node in $\sG[\phi_1, \phi_2]$. The rest of the proof is a simple variation of the one of part (2).
\end{proof}

We can now proceed to proving the theorem:

\setcounter{theorem}{1}
\begin{theorem}
 Let $P$ be a NPP and let $\sG$ be a valid abstraction of $P$ relative to $F$ for the abstract domain
$D^\sharp$. Then 
\begin{align*}
 \reach{\sM_P, F}^{-} &\in [\reach{\sG, \{\varocircle, \varotimes\}}^{--}, \reach{\sG, \{\varocircle, \varotimes\}}^{-+}] \text{ and } \\
 \reach{\sM_P, F}^{+} &\in [\reach{\sG, \{\varocircle\}}^{+-}, \reach{\sG, \{\varocircle\}}^{++}].
\end{align*}
\end{theorem}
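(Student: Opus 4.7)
The plan is to reduce the theorem to the three assertions (1), (2), (3) highlighted in the sketch, which together bound both the min and max reachability probabilities of $\sM_P$ by the four extremal game values of $\sG$. Once these are in hand, the theorem follows by taking suprema / infima over strategies. For example, (2) gives $\reach{\sM_P, F}^{+} \leq \sup_{\psi} \reach{\sG[\phi_1,\psi], \{\varocircle,\varotimes\}}$ for every $\phi_1$, and then taking the inf over $\phi_1$ yields the upper bound $\reach{\sG, \{\varocircle\}}^{++}$ (using $\varotimes$ has no outgoing edges together with the argument that the optimal Player 2 never gains by stopping at $\varotimes$ when the winning condition is $\{\varocircle\}$). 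The lower bound $\reach{\sG, \{\varocircle\}}^{+-}$ comes from (1) by taking sup over $\phi$; the two minimum-probability bounds are analogous.

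To prove each of (1)--(3) I would introduce a bookkeeping function $D : \cyl{\sG} \times \cyl{\sM_P} \rightarrow [0,1]$ that distributes the probabilistic mass of each abstract run $R$ among the concrete runs $r$ it ``covers,'' in such a way that marginalising in one variable recovers $\val{R}{}_{\sG[\phi_1,\phi_2]}$ and marginalising in the other recovers $\val{r}{}_{\sM_P[\phi]}$. Since the concretisations of distinct abstract states need not be disjoint, for each pair $(s,A) \in V_2$ and each concrete state $\sigma \in \gamma(g_A(s))$ I would fix once and for all a deterministic choice $\beta(\sigma,s,A) \in V_p$ of one successor of $\tup{s,A}$ whose concretisation contains $\sigma$; such a successor exists by condition (2a) of the valid-abstraction definition. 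The function $D$ is then defined inductively, following the structure of runs: a step of the abstract run along a probabilistic transition $\tup{s,A,s',d}\rightarrow \tilde s$ is matched with concrete steps $\tup{\tilde\sigma,A}\rightarrow \sigma$ exactly when $\beta(\tilde\sigma,s,A)=\tup{s,A,s',d}$, $d(\tup{p,c})=\tilde s$ and $\sem{c}(\tilde\sigma)=\sigma$; stopping moves into $\varocircle$ (and, for (2)/(3), into $\varotimes$) are matched with a sink $\sigma_f$ added to $\sM_P$.

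For assertion (1) I would then use $D$ to define $\phi_1,\phi_2$ as conditional distributions: $\phi_1(R)(\tup{s,A})$ and $\phi_1(R)(\tup{s,\varocircle})$ are obtained by normalising sums of $D(R,r)\cdot\phi(r)(\tup{\sigma,A})$ over concrete continuations, and $\phi_2$ directs the probabilistic mass to the $V_p$ node chosen by $\beta$. For (2) and (3) the construction is reversed: given $\phi_1$ (which I may assume memoryless and non-randomised by Theorem~1), I define $D$ analogously and use the marginals to extract a concrete strategy $\phi$, while $\phi_2$ is set to prefer $\varocircle$ (for the upper bound) or $\varotimes$ (for the lower bound) whenever allowed by the arena. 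The core lemma to verify in each case is a pair of mass-conservation identities $D_R = \val{R}{}$ and $D_r = \val{r}{}$, proved by simultaneous induction on run length; combining them yields a bijection between $\cyl{\gpp,\{\varocircle\}}$-mass and $\cyl{\sM_P[\phi],F}$-mass (together with an inclusion for the mass flowing into $\varotimes$, which can only make the relevant reachability in $\sG$ larger or smaller depending on the winning condition, matching the asserted inequalities).

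The main obstacle, and the reason the proof from~\cite{probtrans} cannot be reused, is precisely the absence of the disjointness property: a single concrete run $r$ is generally compatible with many abstract runs $R$ because successive configurations can belong to several abstract states. The functions $\beta$ and $D$ are exactly the device for coping with this: $\beta$ resolves the non-determinism of which cover to blame, and $D$ records a consistent splitting of probabilistic mass. The verification of $D_R=\val{R}{}$ is the longest computation, and the care lies in handling the three structural cases (step into $V_p$, step into $\varocircle$, step into $\varotimes$) while exploiting the valid-abstraction invariants $g_A(\gamma(s))\subseteq\bigcup_j\gamma(s_j)$ and $\gamma(d(\tup{p,c}))\supseteq c(\gamma(s'))$ to ensure that every concrete transition has a matching abstract one and vice versa.
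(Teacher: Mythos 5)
Your overall architecture --- reduce the theorem to assertions (1)--(3), introduce a mass-distribution function $D$ together with a tie-breaking map $\beta$ that resolves the overlap of concretizations, define the missing strategies as conditional distributions extracted from $D$, and verify the two marginal identities $D_R=\val{R}{}$ and $D_r=\val{r}{}$ by induction on run length --- is exactly the proof given in the appendix, down to the sink state $\sigma_f$ and the choice of $\phi_2$ preferring $\varocircle$ resp.\ $\varotimes$ whenever the arena allows it. The construction itself is therefore fine and not a different route.

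However, the reduction step as you have written it contains a genuine quantifier error: the assertions are wired to the wrong bounds. Assertion (2) produces, for a \emph{given} $\phi_1$, \emph{one} concrete strategy $\phi$ with $\reach{\sM_P[\phi],F}\leq\sup_{\psi}\reach{\sG[\phi_1,\psi],\{\varocircle,\varotimes\}}$; since $\reach{\sM_P,F}^{+}$ is a supremum over all $\phi$, the existence of one such $\phi$ does not bound it --- it only bounds the infimum $\reach{\sM_P,F}^{-}$, and taking the infimum over $\phi_1$ then yields the upper bound $\reach{\sG,\{\varocircle,\varotimes\}}^{-+}$ on the \emph{minimal} probability, not $\reach{\sG,\{\varocircle\}}^{++}$. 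Your parenthetical appeal to ``Player 2 never gains by stopping at $\varotimes$'' cannot repair this, because $\reach{\sG,\{\varocircle,\varotimes\}}^{-+}$ and $\reach{\sG,\{\varocircle\}}^{++}$ already differ in whether Player 1 minimizes or maximizes. Likewise the lower bound $\reach{\sG,\{\varocircle\}}^{+-}\leq\reach{\sM_P,F}^{+}$ does not come from (1) --- taking the supremum over $\phi$ in the right-hand inequality of (1) gives the \emph{upper} bound $\reach{\sM_P,F}^{+}\leq\reach{\sG,\{\varocircle\}}^{++}$ --- but from (3), which for every $\phi_1$ supplies a concrete $\phi$ doing at least as well as the pessimal Player 2. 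The correct matching is: the left inequality of (1) gives $\reach{\sG,\{\varocircle,\varotimes\}}^{--}\leq\reach{\sM_P,F}^{-}$, the right inequality of (1) gives $\reach{\sM_P,F}^{+}\leq\reach{\sG,\{\varocircle\}}^{++}$, (2) gives $\reach{\sM_P,F}^{-}\leq\reach{\sG,\{\varocircle,\varotimes\}}^{-+}$, and (3) gives $\reach{\sG,\{\varocircle\}}^{+-}\leq\reach{\sM_P,F}^{+}$. With that correction the rest of your argument goes through as in the paper.
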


\begin{proof} Let $\phi \in S_1(\sM)$ be a strategy of the (single) player in $\sM_P$. By
Lemma~\ref{th:strategy-construction}(1)  there exists a 
strategy $\phi_1 \in S_1(\sG)$ satisfying
\begin{equation*}
   \inf_{\psi \in S_2(\sG)} \reach{\sG[\phi_1, \psi], \{\varocircle, \varotimes\}}{} \leq \reach{\sM_P[\phi], F}{}. 
\end{equation*}
From this we conclude that for all $\phi \in S_1(\sM_P)$
\begin{equation*}
 \reach{\sG, \{\varocircle, \varotimes\}}^{--} \leq \inf_{\psi \in S_2(\sG)} \reach{\sG[\phi_1, \psi], \{\varocircle, \varotimes\}}{} \leq \reach{\sM_P[\phi], F}{}
\end{equation*}
and hence 
\begin{equation*}
\reach{\sG, \{\varocircle, \varotimes\}}^{--} \leq \inf_{\phi \in S_1(\sM_P)} \reach{\sM_P[\phi], F}{} = \reach{\sM_P, F}^{-}.           
\end{equation*}
The inequality $\reach{\sM_P, F}^{+}  \leq \reach{\sG, \{\varocircle\}}^{++}$
can be proved in the same way by using the left inequality of Lemma~\ref{th:strategy-construction}(1). The remaining 
inequations $\reach{\sM_P, F}^{-}  \leq \reach{\sG, \{\varocircle, \varotimes\}}^{-+}$ and
$\reach{\sM_P, F}^{+}  \geq \reach{\sG, \{\varocircle, \varotimes\}}^{+}$ are proved similarly using
Lemma~\ref{th:strategy-construction}(2)
and Lemma~\ref{th:strategy-construction}(3), respectively.

\end{proof}


\subsection*{Proof of Theorem \ref{thm:termination}}

\begin{theorem}
Algorithm~\ref{alg:compute-g} terminates, and its result $\sG$ is a valid abstraction.
\end{theorem}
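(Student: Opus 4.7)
The plan is to prove the two parts separately: first termination, then validity.

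For termination, I would argue that only finitely many nodes are ever inserted into $V_1$, from which termination follows easily because (i) each node in $V_1$ is processed by \texttt{GENSUCCS} at most once (it is pushed onto $\worklist$ only at its creation), and (ii) each call to \texttt{GENSUCCS} performs a bounded amount of work, since $\sC$ is finite, $g_A^\sharp(s)$ is finite by assumption on abstract guards, and $up_A$ is finite by the definition of an NPP. To bound $|V_1|$, I use that the $\pred$ pointers turn $V_1 \setminus \{\varocircle, \varotimes\}$ into a tree rooted at $s_0$ with finite branching (again by the bounds above). If this tree were infinite, König's lemma would produce an infinite branch $s_0, s_1, s_2, \ldots$ where each $s_{k+1}$ was created from $s_k$ by firing some command $A_k \in \sC$. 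Since $\sC$ is finite, some command $A$ occurs infinitely often, say at positions $i_1 < i_2 < \cdots$.

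The main obstacle is showing that widening forces this subsequence to stabilize. Along the branch, when $s_{i_{k+1}}$ is created, \texttt{EXTRAPOLATE} walks up the spanning tree and finds an ancestor labeled with $A$; by construction the closest such ancestor is $s_{i_k}$ (all strictly intermediate edges carry commands other than $A$). The returned value therefore has the form $t_k \nabla (t_k \sqcup v_{k+1})$ for some $t_k$ determined by the loop in \texttt{EXTRAPOLATE}. A small invariant check on \texttt{EXTRAPOLATE} shows that the resulting values at positions $i_1, i_2, \ldots$ form an increasing sequence in $D^\sharp$ (because $a \nabla b \sqsupseteq a, b$). Consider the derived sequence $b_0 := s_{i_1}$, $b_{k+1} := b_k \nabla a_{k+1}$ with $a_{k+1}$ the unwidened post-value; by the stationarity axiom of $\nabla$, this sequence stabilizes, hence so does the subsequence $(s_{i_k})_k$. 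Once stabilization occurs, the fresh $v$ computed at a later position equals an existing element of $V_1$, so the \texttt{if $v \not\in V_1$} test fails and no new node is added to extend the branch—contradicting our assumption that the branch produced infinitely many distinct tree nodes.

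For validity, I would go through each clause of Definition~\ref{def:validabstraction} and match it to a line of \texttt{GENSUCCS}. The initial state condition is satisfied by line~1 of the main routine. Clause~(1a) is enforced by the block guarded by $\gamma(s) \cap F \neq \emptyset$: the edges $s \to \tup{s,\varocircle} \to \varocircle$ are always added, the self-loop $\tup{s,\varocircle} \to \tup{s,\varocircle}$ is added exactly when $\gamma(s) \not\subseteq F$, and the \texttt{return} ensures that otherwise $\tup{s,\varocircle}$ is the only successor of $s$. Clause~(1b) is clause the $g_A(\gamma(s)) \neq \emptyset$ branch. Clauses~(2b) and~(2c) are handled by the $\fopt$ test and the $g_A(\gamma(s)) \neq \gamma(s)$ test, respectively; the conservative versions of these tests are acceptable because the stated sufficient conditions still hold. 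Clause~(2a) is where the abstract-transformer soundness matters: the loop over $s' \in g_A^\sharp(s)$ generates successors $\tup{s,A,s',d}$, and the assumption $\bigcup_{b \in g_A^\sharp(s)} \gamma(b) \supseteq g_A(\gamma(s))$ yields exactly the required covering property. Clause on $\delta$ in~(3) follows directly from how $d$ is assembled in the inner \texttt{for all} loop. Clause~(4) holds because $\varocircle$ and $\varotimes$ are never put in $\worklist$, so \texttt{GENSUCCS} is never invoked on them.

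The only remaining obligation is that each computed successor $v$ satisfies the soundness condition $\gamma(v) \supseteq c(\gamma(s'))$ required for membership in $V_p$. This is the place where widening must be shown sound. When \texttt{EXTRAPOLATE} returns $v$ unchanged, this is immediate from the assumption on $c^\sharp$. When it returns $t \nabla (t \sqcup v)$, both monotonicity of $\gamma$ and the inflationary property of $\nabla$ give $\gamma(t \nabla (t \sqcup v)) \supseteq \gamma(t \sqcup v) \supseteq \gamma(v) \supseteq c(\gamma(s'))$, so the condition is preserved. Combining all clauses shows $\sG$ is a valid abstraction, completing the proof.
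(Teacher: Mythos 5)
Your proof follows essentially the same route as the paper's: termination via K\"onig's lemma on the (finitely branching) spanning tree, the pigeonhole principle on the finite set of commands, and the stationarity axiom of the widening applied to the chain of nodes created by the repeated command, concluding that the chain stabilizes and the branch cannot grow; validity by matching each clause of Definition~\ref{def:validabstraction} to the corresponding line of \texttt{GENSUCCS}. You are in fact somewhat more explicit than the paper on two points it glosses over --- the reduction from ``$V_1$ is finite'' to termination of the work-list loop, and the soundness of the widened successors via $\gamma(t \nabla (t \sqcup v)) \supseteq \gamma(v)$ --- but the argument is the same.
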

\begin{proof} Assume for the sake of contradiction that Algorithm~~\ref{alg:compute-g} does not terminate. 
  Since every node in $\sG$ has only finitely many successors, the spanning tree of $\sG$ contains an 
  infinite branch $s_0 \xrightarrow {a_0} s_1 \xrightarrow {a_1} s_2 \ldots$ 
  by K\"{o}nig's lemma, and at least one action $a \in \sC$ 
  appears infinitely often in the branch, since $\sC$ is finite. Let $q_0, t_0, q_1, t_1 \ldots$ be the sequence of 
  all nodes in the branch such that $q_l \xrightarrow a t_l$ for all $l \in \mathbb{N}$.
  Then, by the definition of \texttt{EXTRAPOLATE}, there exists a sequence $v_0, v_1, \ldots$ of elements in $D^\sharp$ such that
  $t_{l+1} = t_{{l}} \nabla (v_l \sqcup t_{l}) \sqsupseteq t_{l}$ for all $l \geq 0$, and so
  $t_{0} \sqsubseteq t_{1} \sqsubseteq \ldots$ holds.
  Define $a_0 := t_{0}$ and $a_{l+1} := v_{l} \sqcup t_{l}$ for $l \geq 0$. A simple induction shows that this sequence is monotonically increasing.
  Since $t_1 = a_1$ and for $l \geq 0$ it holds that $t_{l+1} = t_{{l}} \nabla (v_l \sqcup t_{l}) = t_l \nabla a_{l+1}$, we 
  conclude from the definition of a widening operator from Section~\ref{subsec:abstractnpp} that there is  a number $k$ such that 
  $t_{k} = t_{k+1}$, a contradiction to the assumption that the branch is infinite (since then we would have a cycle).

   We already pointed out that every newly constructed node in $\sG$ satisfies the conditions from Def.~\ref{def:validabstraction} and
   $\texttt{EXTRAPOLATE}(a) \sqsupseteq a$ holds for all $a \in D^\sharp$, we can conclude that $\sG$ is a valid abstraction.
\end{proof}
}{}
\end{document}